\documentclass[letter, a4paper]{IEEEtran}
\usepackage{amsmath,amsfonts}
\usepackage{algorithm}
\usepackage{array}
\usepackage{textcomp}
\usepackage{stfloats}
\usepackage{url}
\usepackage{verbatim}
\usepackage{graphicx}
\usepackage{cite}
\usepackage{float} 
\usepackage{authblk}
\usepackage{lipsum}
\usepackage[mathscr]{euscript}
\usepackage{framed}
\usepackage{algpseudocode}
\usepackage{bm}
\usepackage{stfloats}  
\usepackage{setspace}
\usepackage{balance}
\usepackage{amsthm} 

\usepackage{etoolbox}

\usepackage[caption=false,font=footnotesize]{subfig} 

\usepackage{graphicx,amsmath,amssymb,amsfonts}
\allowdisplaybreaks[3]
\usepackage{tikz}
\usetikzlibrary{arrows.meta,backgrounds,calc,decorations.pathreplacing,fit,positioning,shapes.geometric}
\usepackage{hyperref}
\hypersetup{colorlinks=true}   
\usepackage{xcolor}
\usepackage[T1]{fontenc}   
\newtheorem{remark}{\textbf{Remark}}
\newtheorem{lemma}{\textbf{Lemma}}

\newtheorem{corollary}{\textbf{Corollary}}

\floatname{algorithm}{Algorithm}

\makeatletter
\renewcommand{\maketag@@@}[1]{\hbox{\m@th\normalsize\normalfont#1}}%
\makeatother

\makeatletter

\newcommand{\Rmnum}[1]{\expandafter\@slowromancap\romannumeral #1@}

\makeatother

\begin{document}
	\captionsetup{font={small}}
	\bstctlcite{reference:BSTcontrol}
	\title{\fontsize{22 pt}{\baselineskip}\selectfont  Sparse Rotatable Array (SRA): Unifying Array Aperture and Antenna Directivity for Wireless Communications}
	\author{
		\fontsize{10 pt}{\baselineskip}\selectfont Ailing~Zheng, Qingqing~Wu, Xiyuan~Liu, Wen~Chen
		\vspace{-11 mm}
		\thanks{A. Zheng, Q. Wu, X. Liu, and W. Chen are with the School of Integrated Circuit, Shanghai Jiao Tong University, Shanghai 200240, China (e-mail: {ailing.zheng, qingqingwu, xiyuanliu\_SJTU, wenchen}@sjtu.edu.cn).}
		
	}
	
	
	\maketitle
	\begin{abstract} 
		Sparse rotatable array (SRA) is a novel reconfigurable antenna architecture that jointly exploits sparse aperture configuration and antenna directivity to enhance spatial resolution for future wireless communications. Specifically, SRA activates a subset of rotatable antennas over a large candidate aperture and adjusts their boresight directions, thereby creating a directionally selective sparse aperture with reduced hardware requirements and enhanced spatial flexibility.
In this paper, we investigate an SRA-aided multi-group communication system, where users are organized into spatial groups with different service requirements. We develop a group-aware SRA design framework by jointly optimizing the sparse-aperture allocation, RA orientations, and transmit beamforming to maximize the weighted max-min signal-to-interference-plus-noise ratio (SINR).
	Then, we characterize the operating principles of SRA and reveal that sparse aperture improves spatial resolution by enlarging the effective array aperture, while antenna directivity suppresses inter-group coupling through directional control, thereby enabling simplified group-wise beamforming structures.
		Guided by these insights, we develop a structured low-complexity  alternating optimization algorithm that embeds a closed-form projected group-center RA orientation rule into the sparse-aperture allocation and beamforming design. The proposed algorithm combines analysis-guided initialization, sampled multi-start antenna-allocation search, and bisection-based second-order cone programming for beamforming. Numerical results show that the proposed SRA design closely approaches fully-shared SRA benchmarks and significantly outperforms compact subarray and omni sparse-array schemes.
	\end{abstract}
	
		\vspace{-2mm}
\begin{IEEEkeywords}
	sparse rotatable array (SRA), multiuser communications, spatial resolution, sparse aperture allocation, antenna directivity, interference suppression.
\end{IEEEkeywords}

	\vspace{-5mm}
	\section{Introduction}
	\vspace{-2mm}
Future wireless networks are expected to support unprecedented data rates, massive connectivity, and diverse emerging applications while meeting increasingly stringent requirements on energy efficiency and deployment cost \cite{6GSaad}. These demands call for more efficient exploitation of the spatial domain, which has become a fundamental resource for supporting high-capacity multiuser transmission and interference management. 
Extremely large-scale multiple-input multiple-output (XL-MIMO) has emerged as a promising technology for sixth-generation (6G) wireless communications by enlarging physical apertures and providing abundant spatial degrees of freedom (DoFs) \cite{Lu2024MIMO}. Its large aperture enables substantial array gains and fine spatial resolution, thereby improving spatial multiplexing capability and supporting efficient multiuser interference management.
However, conventional XL-MIMO architectures mainly achieve such spatial enhancement through dense deployment of antennas over large physical apertures. As the array size increases, this approach entails rapidly growing hardware cost, energy consumption, and computational burden. Moreover, the spatial transmission characteristics of each antenna remain largely constrained by fixed antenna locations and radiation patterns, restricting the flexibility of spatial resource configuration. 
Therefore, innovative research on scalable spatial transmission architectures that can flexibly configure large-aperture spatial resources through adaptive aperture and radiation control while reducing hardware requirements  remains imperative for future wireless networks.

In this paper, we introduce a sparse rotatable array (SRA) architecture as a promising new solution to achieve the above goal. Specifically, SRA activates only a subset of candidate rotatable antennas (RAs) distributed over a large physical aperture, while each activated RA can independently adjust its boresight direction. Sparse activation substantially reduces the required radio-frequency (RF) chains while preserving a large effective aperture for enhanced spatial resolution. Meanwhile, boresight reconfiguration provides flexible directional control through adjustable RA orientations.  
Therefore, SRA jointly configures the effective array aperture and antenna directivity through RA activation and boresight adjustment, thereby integrating aperture and directivity reconfiguration within a unified array architecture.
The proposed SRA architecture has recently been
investigated in wireless sensing applications \cite{Ye2026}, where
SRA is exploited to alleviate spatial ambiguity
and improve sensing performance. However, its potential for
multiuser communications remains largely unexplored.

SRA enables large sparse apertures to achieve enhanced spatial resolution beyond aperture enlargement alone.
Specifically, although increasing the array aperture improves spatial resolution, conventional sparse arrays face inherent challenges in controlling grating lobes. Periodic sparse layouts may generate strong grating lobes due to spatial aliasing, while non-uniform sparse layouts can alleviate such periodic ambiguities but generally retain residual spatial leakage \cite{Chen2026,Chen2024,Zhou2025}.
By adjusting the boresight directions of activated RA toward intended spatial regions,  SRA concentrates the available directional gain where useful transmission is desired and attenuates radiation toward undesired grating-lobe regions. The resulting directionally selective sparse aperture therefore preserves the resolution benefit of a large physical aperture while simultaneously enhancing useful radiation and limiting spatial leakage.
The above capability originates from a distinct spatial reconfiguration principle that differentiates SRA from existing flexible antenna architectures. Movable antennas (MAs) and fluid antennas (FAs) adapt the propagation geometry by relocating antennas within a prescribed region \cite{Wong2020,Zhu2023,Ma2023MIMO,Zhu2023MA}, typically requiring additional movement space or position-control mechanisms. RAs, in contrast, retain fixed antenna locations and adjust their boresight directions to reshape element-level radiation gains, with existing studies mainly considering orientation adjustments over prescribed array layouts \cite{ZhengRA2025,AilngRASecure2026,AilngRA2026,Zheng2026ISAC}. SRA instead configures the activated RAs over a large candidate aperture to jointly configure the effective aperture and antenna orientations, thereby enabling flexible spatial resource reconfiguration.

The joint configurability of the effective aperture and antenna directivity in SRA enables a new paradigm for flexible multiuser transmission. In practical wireless networks, users are often distributed non-uniformly in space, where different service regions may exhibit distinct user densities, spatial separations, and channel characteristics. Such spatial heterogeneity leads to different requirements on spatial resolution and interference suppression. A fully-shared SRA offers the greatest transmission flexibility by allowing all activated RAs to jointly serve all users. However, it requires joint beamforming and RA orientation design over the entire sparse aperture, resulting in a high-dimensional optimization problem whose complexity increases rapidly with the numbers of activated RAs and users. 
Moreover, since the directional gain of each RA is concentrated around its boresight, an RA oriented toward one service region provides only limited useful gain to other well-separated regions. Therefore, full-array cooperation may offer only limited additional benefits when the service regions are sufficiently separated.
Motivated by these considerations, we consider a group-aware SRA architecture in which users with similar spatial characteristics are clustered into service groups, as shown in Fig.~\ref{Fig1}. The activated RAs are partitioned into group-specific sparse subarrays, allowing each group to configure its effective aperture according to its spatial demand. This structured partition transforms the globally coupled design into group-wise sparse-aperture allocation and reduced-dimensional beamforming. Consequently, each group can leverage a tailored effective aperture to enhance intra-group spatial separability while utilizing RA directivity to suppress inter-group leakage.

However, designing an efficient configuration strategy for such a group-aware SRA architecture remains challenging. The SRA configuration is inherently combinatorial, since the number of activated RAs assigned to each group and their sparse positions over the candidate aperture must be jointly optimized. Moreover, these configuration variables are tightly coupled with the resulting effective channel structure, including array gain, intra-group channel correlation, directional leakage, and inter-group interference. Since each candidate configuration requires a corresponding beamforming optimization, exhaustive search over possible SRA configurations becomes computationally prohibitive. Therefore, an efficient design framework is needed to fully exploit the spatial adaptability of SRA while managing the coupled effects of aperture configuration and antenna directivity. To address these challenges, this paper investigates an SRA-enabled multi-group communication system. The main contributions are summarized as follows.

			\begin{figure}[t]
	\centering
	\includegraphics[width=0.4 \textwidth]{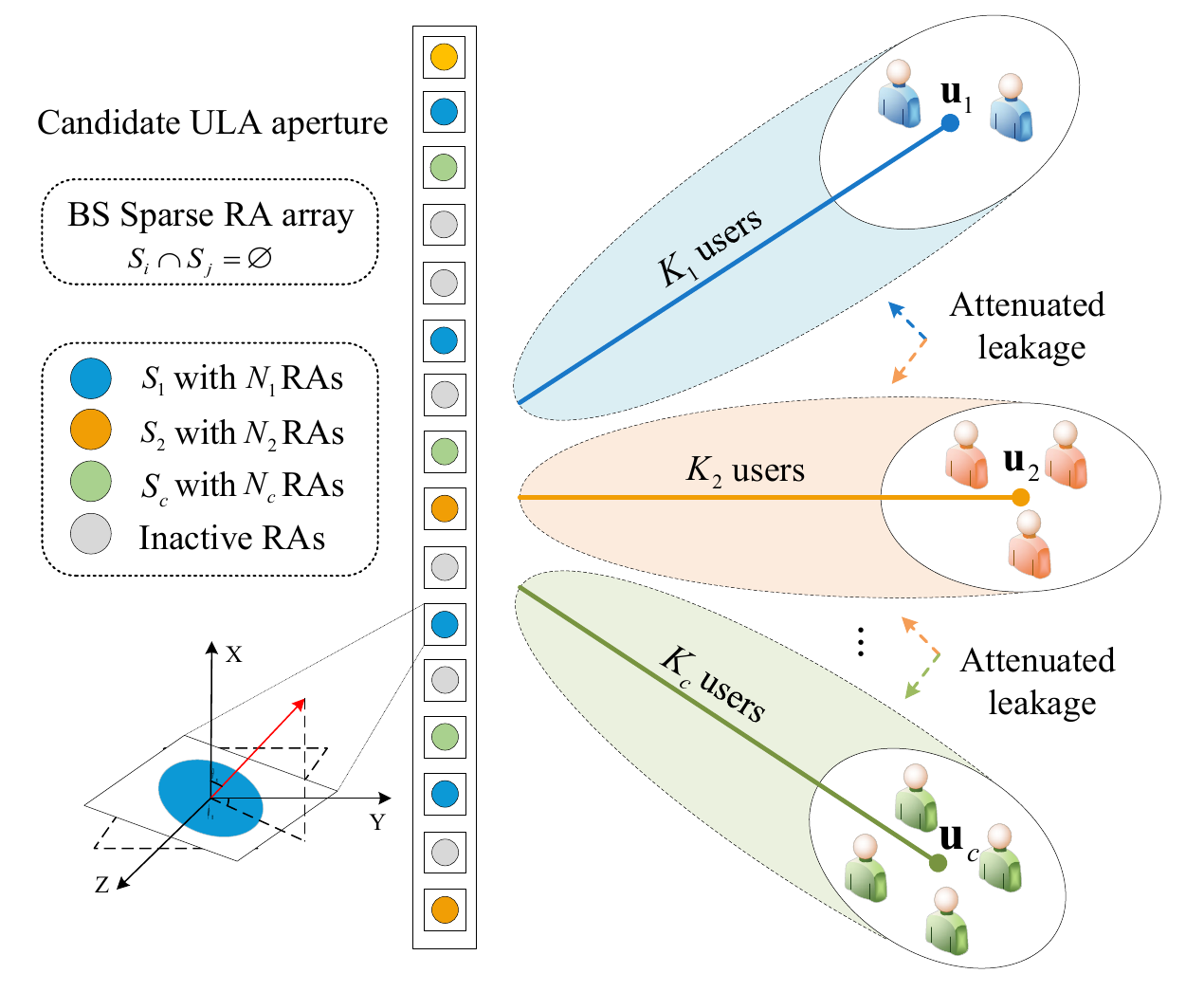}
	\caption{Illustration of the proposed SRA architecture with group-specific sparse subarrays.}
	\label{Fig1}
	\vspace{-7mm}
\end{figure}
\begin{itemize}
	
	\item 
	We propose an SRA architecture for multiuser communications and develop a group-aware SRA framework to accommodate heterogeneous spatial service requirements, where the activated RAs are partitioned into group-specific sparse subarrays. Based on this framework, a weighted max-min signal-to-interference-plus-noise ratio (SINR) optimization problem is formulated by jointly optimizing the sparse-aperture allocation, RA orientations, and transmit beamforming.

	\item
	We characterize the operating principles of SRA by revealing
	the impacts of sparse-aperture configuration and antenna directivity on multiuser channel structure. Specifically, sparse
	apertures improve intra-group spatial separability, leading to
	alignment between group-wise zero-forcing (ZF) and maximum-ratio transmission (MRT) beam directions, while RA
	directivity suppresses inter-group coupling, enabling global ZF to
	approach block-diagonal group-wise ZF. These insights reveal a
	decoupled group-wise beamforming structure under low intra-group correlation and weak inter-group coupling, and provide physical guidance for group-wise antenna-number allocation and non-periodic sparse-position initialization.

	\item
		We develop a structured low-complexity alternating optimization (AO) framework for SRA design. Starting from the analysis-guided antenna-number and non-periodic position initialization, the proposed algorithm refines the sparse-aperture configuration through a sampled multi-start greedy search. Specifically, for each candidate allocation, the corresponding RA orientations are directly determined by a closed-form projected group-center rule, thereby avoiding iterative orientation optimization, and the resulting reduced-dimensional weighted max-min beamforming problem is solved via bisection and second-order cone programming (SOCP).

	\item
	Numerical results demonstrate the effectiveness of the proposed SRA design compared with various benchmark schemes. The results verify that jointly configuring the sparse aperture and antenna directivity can effectively improve multiuser transmission performance by enhancing intra-group spatial separability and suppressing inter-group leakage. In particular, the proposed SRA design achieves approximately $12.2$-dB and $5.4$-dB gains in weighted max-min SINR over the omni sparse-array scheme and the compact subarray scheme, respectively.

\end{itemize}

	The remainder of this paper is organized as follows. Section \ref{System Model}  introduces the system model and problem formulation. Section \ref{Performance Analysis} characterizes the spatial operating principles of SRA and derives the resulting simplified beamforming structures under low intra-group correlation and weak inter-group coupling. Section \ref{Proposed Solution} develops the proposed low-complexity AO framework. Section \ref{Simulation Results} provides numerical results and Section \ref{Conclusion} concludes this paper.

\vspace{-3mm}
		\section{System Model}  \label{System Model} 
\vspace{-1mm}

	\begin{figure}[t]
		\centering
		\includegraphics[width=0.47 \textwidth]{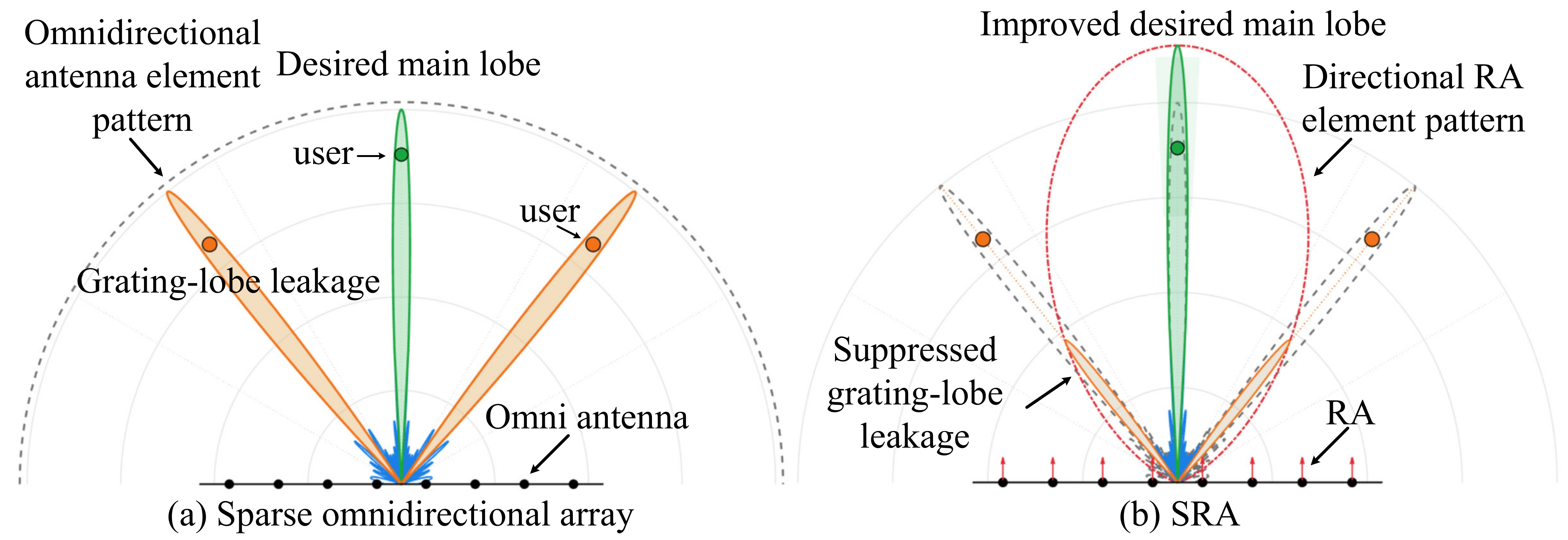}
		\caption{Directional-gain enhancement and grating-lobe suppression of the SRA.
			(a) A sparse array with omnidirectional elements suffers from strong grating-lobe leakage.
			(b) The directional RA element pattern enhances the desired main lobe while suppressing off-boresight grating-lobe leakage.}
			\label{fig:RA_GL_suppression}
		\vspace{-7mm}
	\end{figure}
	We consider a downlink SRA-assisted multi-group communication
	system, where a base station (BS) serves multiple single-antenna users distributed
	in a three-dimensional coverage region $\mathcal A$. 
The grating-lobe suppression mechanism of SRA enabled by sparse aperture and RA directivity is illustrated in Fig.~\ref{fig:RA_GL_suppression}.
	 Based on the spatial user distribution, the users in $\mathcal A$ are partitioned into multiple service groups.
	Let $\mathcal C=\{1,\ldots,C\}$ denote the set of service groups.
	For each service group $c\in\mathcal C$, $\mathcal K_c$ denotes the set
	of users in this group, and $K_c=|\mathcal K_c|\ge 1$ denotes the
	corresponding number of users. The vector
	$\mathbf u_c=[x_c,y_c,z_c]^T$ denotes the representative center of
	service group $c$. A service group may correspond to a compact hotspot
	group or a singleton group. When $K_c=1$, the group center
	reduces to the location of the only user, i.e.,
	$\mathbf u_c=\mathbf u_{c,1}$. The total number of users is
	$K=\sum_{c\in\mathcal C}K_c$. For user $k\in\mathcal K_c$, its local coordinate relative to the
	representative center of service group $c$ is denoted by
	$\mathbf u_{c,k}^{l}=[x_{c,k}^{l},y_{c,k}^{l},z_{c,k}^{l}]^T$.
	Thus, the global coordinate of user $k\in\mathcal K_c$ is given by
	$\mathbf u_{c,k}=\mathbf u_c+\mathbf u_{c,k}^{l}$. For a single-user
	service group, $\mathbf u_{c,1}^{l}=\mathbf 0$.
	The BS is located at $\mathbf t_0=[x_0,y_0,z_0]^T$ and is equipped with a candidate uniform linear array (ULA) aperture consisting of $M$ potential RA positions, among which $N\leq M$ RAs are activated~\cite{Chen2024}.
	The local coordinate of the $m$-th candidate
	RA position is denoted by
	$\mathbf t_m^l=[x_m^l,0,0]^T$, $m=1,\ldots,M$, where
	$x_m^l=\left(m-\frac{M+1}{2}\right)\Delta$ and $\Delta$ denotes the
	inter-grid spacing. Accordingly, the global coordinate of the $m$-th
	candidate RA position is given by
	$\mathbf t_m=\mathbf t_0+\mathbf t_m^l$.

	Different from conventional fully shared-array transmission, the considered SRA system activates only a subset of candidate RAs over a large physical aperture. For multi-group transmission, the activated RAs are partitioned into $C$ disjoint group-specific sparse subarrays, where each subarray serves one service group. Let $N_c$ denote the number of RAs assigned to service group $c$, with $N=\sum_{c\in\mathcal C}N_c$. The activated RAs assigned to each service group form a group-specific sparse aperture, where the sparse aperture configuration and boresight directions jointly determine the effective channel characteristics.
	To characterize the sparse aperture allocation, we define the binary
	antenna allocation variable $b_{m,c}\in\{0,1\}$, where $b_{m,c}=1$
	indicates that candidate position $m$ is assigned to service group $c$,
	and $b_{m,c}=0$ otherwise. Thus, the index set of the subarray
	serving group $c$ is given by
	$\mathcal S_c=\{m\mid b_{m,c}=1\}$. The antenna allocation satisfies
	\setlength\abovedisplayskip{1pt}
	\setlength\belowdisplayskip{1pt}
	\begin{align}
		&\sum\nolimits_{c\in\mathcal C} b_{m,c}\le 1, \forall m,
		N_c=\sum\nolimits_{m=1}^{M} b_{m,c}, \forall c\in\mathcal C,\\
		&N_c\ge N_c^{\min}, \forall c\in\mathcal C,
		\sum\nolimits_{c\in\mathcal C}N_c=N.
	\end{align}
	
	Let $\mathbf f_m$ denote the boresight direction of the $m$-th RA, which is parameterized by the zenith angle $\phi_m$ and azimuth angle $\theta_m$ as
$\mathbf f_m
		=
		[\sin\phi_m\cos\theta_m,\sin\phi_m\sin\theta_m,\cos\phi_m]^T$.
	The practical zenith-angle constraint is given by $0\le \phi_m\le \phi_{\max}$, where $\phi_{\max}\in[0,\pi/2]$ denotes the maximum adjustable zenith angle. Equivalently, the boresight vector satisfies
	\begin{equation}
		\label{angel_con}
		\cos(\phi_{\max})\le \mathbf f_m^T\mathbf e_z\le 1,
	\end{equation}
	where $\mathbf e_z=[0,0,1]^T$ is the unit vector along the $z$-axis. 
	Boresight directions are specified only for activated positions, since inactive positions do not contribute to transmission.
		The effective antenna gain for each RA depends on the signal arrival/departure angle and antenna directional gain pattern. The directional gain pattern of each RA is modeled as \cite{ZhengRA2026}
	\begin{equation}
		G(\epsilon)=
		\begin{cases}
			G_0\cos^{2p}(\epsilon),
			&
			\epsilon\in\left[0,\frac{\pi}{2}\right],
			\\
			0,
			&
			\text{otherwise},
		\end{cases}
	\end{equation}
	where $\epsilon$ denotes the angular offset between the signal direction and the RA boresight,
	$p\ge0$ controls the antenna directivity,
	$G_0=2(2p+1)$ denotes the maximum directional gain.
	
	\vspace{-4mm}
	\subsection{Channel Model}
	\vspace{-1mm}
	We adopt a narrow-band geometric frequency-flat channel model. 
	The perfect channel state information (CSI) is assumed to reveal the fundamental performance of the proposed SRA-assisted system. 
	The line-of-sight (LoS) link power gain from transmit RA $m$ to user $k$ in group $c$ is given by \cite{Peng2026}
	\begin{align}
		G_{m,c,k}^{\mathrm{LoS}}(\mathbf{f}_{m}) 	&=\!\beta_0r_{m,c,k}^{-2}G(\epsilon_{m,c,k}) \nonumber \\ 
		&=\!\beta_0r_{m,c,k}^{-2}G_0\left[\!\frac{\mathbf{f}_{m}^\mathrm{T}(\mathbf{u}_{c,k} - {\mathbf{t}}_{m})}{r_{m,c,k}}\!\right]_{+}^{2p},
	\end{align}
	where $\beta_0 = (\frac{\lambda}{4\pi })^2$ is the free-space reference gain constant, $\lambda$ denotes the wavelength, $r_{m,c,k} = \Vert \mathbf{u}_{c,k}-\mathbf{t}_m\Vert_2$ is the distance between the transmit RA $m$ and the user $\mathbf{u}_{c,k}$. Here, $\cos(\epsilon_{m,c,k}) =\frac{(\mathbf{f}_{m})^\mathrm{T}(\mathbf{u}_{c,k} - {\mathbf{t}}_{m})}{r_{m,c,k}}$ represents the cosine of the angle between the transmit RA $m$'s orientation $\mathbf{f}_{m}$ and the LoS direction to the user $\mathbf{u}_{c,k}$. The related channel is expressed as
	\begin{align}
		\!\!\!\!\!\!\! h_{m,c,k}^{\mathrm{LoS}}(\mathbf{f}_{m})&\!=\! \sqrt{G_{m,c,k}^{\mathrm{LoS}}(\mathbf{f}_{m})}e^{-j\frac{2\pi}{\lambda}r_{m,c,k}} \nonumber \\
		&\! =\!\frac{\sqrt{\beta_0G_0}}{r_{m,c,k}} \left[\frac{\mathbf{f}_{m}^\mathrm{T}(\mathbf{u}_{c,k}-{\mathbf{t}}_{m})}{r_{m,c,k}}\right]_{+}^{p}e^{-j\frac{2\pi}{\lambda}r_{m,c,k}}.
	\end{align}
	The channel coefficient captures both the directional antenna gain and the propagation-induced phase shift.
	
	We further consider a scattering environment, where the group $c$ is
	associated with $D_c$ spatially distributed scatterers. The location of
	scatterer $d$ associated with group $c$ is denoted by
	$\mathbf s_{c,d}\in\mathbb R^3$. The non-LoS (NLoS) power gain from transmit RA
	$m$ to scatterer $d$ is given by
	\begin{align}
		G^{\rm NLoS}_{m,c,d}(\mathbf f_m)
		&=
		\beta_0 r_{m,c,d}^{-2}G(\epsilon_{m,c,d}) \nonumber\\
		&=
		\beta_0 r_{m,c,d}^{-2}G_0
		\left[
		\frac{\mathbf f_m^T(\mathbf s_{c,d}-\mathbf t_m)}
		{r_{m,c,d}}
		\right]_+^{2p},
	\end{align}
	where $r_{m,c,d}
	=\|\mathbf s_{c,d}-\mathbf t_m\|_2$
	is the antenna-to-scatterer distance, and $\cos(\epsilon_{m,c,d})
	=\frac{\mathbf f_m^T(\mathbf s_{c,d}-\mathbf t_m)}
	{r_{m,c,d}}$.
	Considering a bi-static scattering model, the NLoS channel coefficient
	from transmit RA $m$ to user $k$ in group $c$ is expressed as
	\begin{align}
		h^{\rm NLoS}_{m,c,k}(\mathbf f_m)
		=
		\sum\nolimits_{d=1}^{D_c}
		a_{m,c,k,d}
		\left[
		\frac{\mathbf f_m^T(\mathbf s_{c,d}-\mathbf t_m)}
		{r_{m,c,d}}
		\right]_+^p,
	\end{align}
	where
	\begin{align}
		&a_{m,c,k,d}
		=
		\sqrt{\frac{\sigma_{c,d}\beta_0G_0}{4\pi}}\,
		\frac{
			e^{-j\frac{2\pi}{\lambda}(r_{m,c,d}+r_{c,d,k})+j\chi_{c,d}}
		}
		{r_{m,c,d}r_{c,d,k}},
	\end{align}
	and $r_{c,d,k}
	=\|\mathbf s_{c,d}-\mathbf u_{c,k}\|_2 $ is the scatterer-to-user distance. Here, $\sigma_{c,d}$ denotes the
	scattering coefficient of scatterer $d$ in
	group $c$, and $\chi_{c,d}\sim\mathcal U(0,2\pi)$ denotes the
	corresponding random scattering phase.
	Then, we have
	\begin{align}
		h_{m,c,k}(\mathbf{f}_{m}) = h_{m,c,k}^{\mathrm{LoS}}(\mathbf{f}_{m})+h_{m,c,k}^{\mathrm{NLoS}}(\mathbf{f}_{m}). 
	\end{align}
	The overall channel between the BS and the user $\mathbf{u}_{c,k}$ is
	\begin{align}
		\label{channel_h}
		&\!\!\! \mathbf{h}_{c,k} \!=\! [h_{1,c,k}(\mathbf{f}_{1}), h_{2,c,k}(\mathbf{f}_{2}), \ldots, h_{M,c,k}(\mathbf{f}_{M})]^T \!\in \!\mathbb C^{M\times 1}.  
	\end{align}

	\vspace{-4mm}
	\subsection{Transmission Model}
	\vspace{-1mm}
	Let $s_{c,k}$ denote the symbol intended for user $k$ in group $c$,
	where $\mathbb E[|s_{c,k}|^2]=1$. Since $N_c$ and $\mathcal S_c$ are
	determined by the optimization variables, we use an $M$-dimensional
	virtual beamforming vector $\mathbf w_{c,k}\in\mathbb C^{M\times 1}$
	for user $k$ in group $c$.
	Physically, group $c$ can only
	use the antennas assigned to $\mathcal S_c$. This is enforced by the
	support constraint $\mathbf B_c\mathbf w_{c,k}=\mathbf w_{c,k},
		\forall c, k\in\mathcal K_c$,
	where $\mathbf B = [\mathbf B_1, \mathbf B_2, \cdots, \mathbf B_C ]$ and $\mathbf B_c=\operatorname{diag}(b_{1,c},\ldots,b_{M,c})$.
	Thus, $w_{m,c,k}=0$ if $b_{m,c}=0$.
	The transmitted signal is
	\begin{equation}
		\mathbf x
		=
		\sum\nolimits_{c=1}^{C}\sum\nolimits_{k=1}^{ K_c}
		\mathbf w_{c,k}s_{c,k}.
	\end{equation}
	The received signal at user $\mathbf{u}_{c,k}$ is
	\begin{align}
		y_{c,k}
		\!\!=\!\!
		\mathbf h_{c,k}^H\mathbf w_{c,k} s_{c,k}
		\!\!	+\!\!\!\!\!\!\!\!
		\sum_{i \in \mathcal{K}_c, i\neq k} \!\!\!\!\!\!
		\mathbf h_{c,k}^H\mathbf w_{c,i} s_{c,i} \!\!+\!\!\!
		\sum_{j\neq c} \! \sum_{i=1}^{K_{j}} \!
		\mathbf h_{c,k}^H\mathbf w_{j,i} s_{j,i}
		\!+\!
		n_{c,k}, \nonumber 
	\end{align}
	where $	n_{c,k}\sim\mathcal{CN}(0,\sigma_k^2)$ denotes additive white Gaussian noise.
	The received SINR of user $k$ in group $c$ is expressed as
	\begin{align}
		\!\!\!\! \gamma_{c,k}
		\!=\!
		\frac{
			|\mathbf h_{c,k}^H\mathbf w_{c,k}|^2
		}{
			\sum_{i\neq k}
			|\mathbf h_{c,k}^H\mathbf w_{c,i}|^2 \!+\! \sum_{j\neq c}\sum_{i=1}^{K_{j}}
			|\mathbf h_{c,k}^H \mathbf w_{j,i}|^2
			\!+\! \sigma_k^2
		}. 
	\end{align}

	\vspace{-2mm}
	\subsection{Problem Formulation}
	\vspace{-1mm}
To account for different service priorities among groups, let $\Gamma_c>0$ denote the priority weight of group $c$.
	We maximize the minimum weighted SINR among all users by jointly optimizing the sparse aperture allocation, transmit beamforming, and RA orientations. The corresponding optimization problem is formulated as
	\begin{subequations}
		\label{P0}
		\begin{eqnarray}
			&\!\!\!\!\!\!\!\!\!\!\!\!\!\!\! \max \limits_{\mathbf B_{c},\mathbf w_{c,k}, \mathbf f_{m}}
			&\!\!\! \min_{c,k}
			\frac{\gamma_{c,k}}{\Gamma_c}
			\label{P1_obj}\\
			&\!\!\!\!\!\!\!\!\!\!\!\!\!\!\! \mathrm{s.t.}
			&\!\!\!\sum\nolimits_{c=1}^{C}\sum\nolimits_{k=1}^{K_c}
			\|\mathbf w_{c,k}\|_2^2
			\leq P_{\max},
			\label{P1_power}\\
			&&\!\!\! \mathbf B_c\mathbf w_{c,k}=\mathbf w_{c,k},
			\forall c,\; k\in\mathcal K_c,
			\label{P1_support}\\
			&&\!\!\! \sum\nolimits_{c=1}^{C}\sum\nolimits_{m=1}^{M}b_{m,c}=N,
			\label{P1_total_ant}\\
			&&\!\!\! \sum\nolimits_{c=1}^{C}b_{m,c}\leq 1,
			\forall m,
			\label{P1_exclusive}\\
			&&\!\!\! \sum\nolimits_{m=1}^{M}b_{m,c}\geq N_c^{\min}, b_{m,c}\in\{0,1\},
			\forall m,c,
			\label{P1_min_ant}\\
			&&\!\!\!	\cos(\phi_{\max})\le \mathbf f_m^T\mathbf e_z\le 1, \Vert \mathbf f_m \Vert_2 = 1, \forall m.
			\label{P1_fm} 
		\end{eqnarray} 
	\end{subequations} 
	Problem~\eqref{P0} is challenging due to the mixed-integer antenna allocation and its coupling with the RA orientations and transmit beamforming. The allocation variables jointly determine the group-wise antenna numbers, sparse positions, and effective channel structure. To address this problem, we develop a structured low-complexity AO algorithm.

	\vspace{-3mm}
\section{Spatial Operating Principles of SRA}
	\label{Performance Analysis}
	\vspace{-1mm}
In this section, we characterize the spatial operating principles of SRA by investigating the roles of its two complementary spatial mechanisms, i.e., the effective array aperture and antenna directivity, in shaping the multiuser channel structure. Specifically, we consider compact multiuser groups with $K_c\geq2$ and adopt the LoS channel model with $D_c=0$ to reveal the fundamental effects of SRA configuration. The sparse apertures formed by group-specific RA activation are analyzed in terms of intra-group channel orthogonality, while RA directivity is characterized for inter-group leakage suppression. These analyses then provide physical guidance for high-quality initialization of the proposed group-wise antenna-number allocation and non-periodic sparse-position initialization.

\vspace{-4mm}
\subsection{Effective-Array-Aperture-Induced Intra-Group Orthogonalization}
	\label{III-A}
	\vspace{-1mm}
	In this subsection, we characterize the role of the effective array aperture provided by sparse RA activation in shaping intra-group channel structure and the resulting group-wise beamforming design. We consider a compact multiuser service group \(c\) with \(K_c\ge 2\), the
	assigned antenna set \(\mathcal S_c\), and \(|\mathcal S_c|=N_c\). Let \(\tilde{\mathbf h}_{c,c,k}\in
	\mathbb C^{N_c\times 1}\) denote the reduced channel from the subarray
	assigned to group \(c\) to user \(k\in\mathcal K_c\). The normalized
	reduced channel is defined as
$
	\bar{\mathbf h}_{c,k}
	=
	{\tilde{\mathbf h}_{c,c,k}}/
	{\|\tilde{\mathbf h}_{c,c,k}\|_2}$.
	The maximum normalized intra-group channel correlation is defined as
$
	\varepsilon_c
	=
	\max_{k\ne i}
	\left|
	\bar{\mathbf h}_{c,k}^{H}\bar{\mathbf h}_{c,i}
	\right|$.
	A small \(\varepsilon_c\) indicates that the users in group \(c\) are well
	separated by the selected sparse aperture. 
	Let $	\bar{\mathbf H}_c
	=
	[\bar{\mathbf h}_{c,1},\ldots,\bar{\mathbf h}_{c,K_c}]^{H}
	\in \mathbb C^{K_c\times N_c}$.
	The normalized group-wise MRT direction for user $k$ is
	$\mathbf q_{c,k}^{\rm MRT}
	=
	\bar{\mathbf h}_{c,k}$.
	The corresponding group-wise ZF direction is
	$\mathbf q_{c,k}^{\rm ZF}
	=
	\bar{\mathbf H}_c^{H}
	\left(\bar{\mathbf H}_c\bar{\mathbf H}_c^{H}\right)^{-1}
	\mathbf e_k$,
	where $\mathbf e_k$ is the $k$-th canonical basis vector.
Its normalized beam direction is defined as
$\bar{\mathbf q}_{c,k}^{\rm ZF}
		=
		{\mathbf q_{c,k}^{\rm ZF}}/
		{\Vert\mathbf q_{c,k}^{\rm ZF}\Vert_2}$.
	
	\begin{lemma}[ZF-MRT beam-direction deviation under intra-group correlation]
		\label{pro1}
	\upshape	If $(K_c-1)\varepsilon_c<1$, then
	$\bar{\mathbf H}_c$ has full row rank. The phase-invariant
		distance between the normalized group-wise ZF and MRT
		beam directions satisfies
		\begin{align}
		\!\!\!	d_{c,k}^{\rm ZF-MRT}
			\!\triangleq \!\!\!
			\min_{\varphi\in[0,2\pi)}
			\left\|
			\bar{\mathbf q}_{c,k}^{\rm ZF}
		\!	\!-\!\!
			e^{j\varphi}\bar{\mathbf h}_{c,k}
			\right\|_2\!\! \leq\!\!
			\frac{
				\sqrt{2(K_c \! -\! 1)}\varepsilon_c
			}{
				\sqrt{1\!-\!(K_c \!-\!2)\varepsilon_c}
			}.
			\label{eq:zf_mrt_bound}
		\end{align}
		For fixed $K_c$,
		$d_{c,k}^{\rm ZF-MRT}
		=
		\mathcal O\!\left(
		\sqrt{K_c-1}\,\varepsilon_c
		\right)$
		as $\varepsilon_c\to0$.
			\end{lemma}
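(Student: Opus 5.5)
The plan is to work with the normalized Gram matrix $\mathbf G_c=\bar{\mathbf H}_c\bar{\mathbf H}_c^H\in\mathbb C^{K_c\times K_c}$. It has unit diagonal and off-diagonal entries of modulus at most $\varepsilon_c$, so $\mathbf G_c=\mathbf I+\mathbf E$ with $\mathbf E$ zero-diagonal and $|E_{ki}|\le\varepsilon_c$.

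\textbf{Full rank.} By Gershgorin's theorem every eigenvalue of $\mathbf G_c$ lies in $[1-(K_c-1)\varepsilon_c,\,1+(K_c-1)\varepsilon_c]$. Hence $(K_c-1)\varepsilon_c<1$ gives $\mathbf G_c\succ0$, so $\bar{\mathbf H}_c$ has full row rank and $\mathbf q^{\rm ZF}_{c,k}$ is well defined.

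\textbf{Geometric reformulation of the distance.} The ZF direction $\mathbf q^{\rm ZF}_{c,k}$ lies in the row space $\mathrm{span}\{\bar{\mathbf h}_{c,i}\}$ and is orthogonal to $\bar{\mathbf h}_{c,i}$ for all $i\neq k$. Let $\mathcal U_k=\mathrm{span}\{\bar{\mathbf h}_{c,i}:i\ne k\}$ and let $\mathbf P^\perp_k$ be the projector onto its orthogonal complement. Then $\bar{\mathbf q}^{\rm ZF}_{c,k}$ equals $\mathbf P^\perp_k\bar{\mathbf h}_{c,k}/\|\mathbf P^\perp_k\bar{\mathbf h}_{c,k}\|$ up to a phase. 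For unit vectors $\mathbf a,\mathbf b$ we have
\[
\min_\varphi\|\mathbf a-e^{j\varphi}\mathbf b\|^2=2(1-|\mathbf a^H\mathbf b|)\le 2(1-|\mathbf a^H\mathbf b|^2).
\]
Here $|\bar{\mathbf q}^{\rm ZF H}\bar{\mathbf h}_{c,k}|=\|\mathbf P^\perp_k\bar{\mathbf h}_{c,k}\|$, and $1-\|\mathbf P^\perp_k\bar{\mathbf h}_{c,k}\|^2=\|\mathbf P_{\mathcal U_k}\bar{\mathbf h}_{c,k}\|^2$. Therefore
\[
(d^{\rm ZF-MRT}_{c,k})^2\le 2\|\mathbf P_{\mathcal U_k}\bar{\mathbf h}_{c,k}\|^2 .
\]

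\textbf{Bounding the projection.} Let $\mathbf H_{-k}$ stack the $K_c-1$ vectors $\bar{\mathbf h}_{c,i}$, $i\ne k$, as rows, and let $\mathbf g=\mathbf H_{-k}\bar{\mathbf h}_{c,k}$ be the vector of cross-correlations, so $\|\mathbf g\|^2\le (K_c-1)\varepsilon_c^2$. Then
\[
\|\mathbf P_{\mathcal U_k}\bar{\mathbf h}_{c,k}\|^2=\mathbf g^H(\mathbf H_{-k}\mathbf H_{-k}^H)^{-1}\mathbf g\le \frac{\|\mathbf g\|^2}{\lambda_{\min}(\mathbf H_{-k}\mathbf H_{-k}^H)}.
\]
The matrix $\mathbf H_{-k}\mathbf H_{-k}^H$ is a $(K_c-1)$-dimensional principal submatrix of $\mathbf G_c$. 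Gershgorin again gives $\lambda_{\min}\ge 1-(K_c-2)\varepsilon_c$, which is positive under the hypothesis. Combining,
\[
(d^{\rm ZF-MRT}_{c,k})^2\le \frac{2(K_c-1)\varepsilon_c^2}{1-(K_c-2)\varepsilon_c},
\]
and taking square roots yields \eqref{eq:zf_mrt_bound}.

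\textbf{Asymptotic statement.} For fixed $K_c$ the denominator tends to $1$ as $\varepsilon_c\to0$, which gives the $\mathcal O(\sqrt{K_c-1}\,\varepsilon_c)$ claim.

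The main obstacle is the reformulation step: identifying the ZF direction with the normalized projection of $\bar{\mathbf h}_{c,k}$ onto $\mathcal U_k^\perp$. This requires checking that $\mathbf q^{\rm ZF}_{c,k}$ lies in the row space and is orthogonal to the other users' channels, and that its phase is absorbed by the minimization over $\varphi$. It also requires the inequality $1-x\le1-x^2$ for $x\in[0,1]$. Once that is in place, the remaining bounds follow directly from Gershgorin's theorem.
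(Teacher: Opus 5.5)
Your proposal is correct and follows the paper's proof essentially step for step: Gershgorin on the normalized Gram matrix for full rank, then the ZF direction as the normalized projection of $\bar{\mathbf h}_{c,k}$ onto the orthogonal complement of the other users' channels, then the bound $(d_{c,k}^{\rm ZF-MRT})^2\le 2\|\mathbf P_{\mathcal U_k}\bar{\mathbf h}_{c,k}\|_2^2$ (your $1-x\le 1-x^2$ is the paper's $1-\sqrt{1-x}\le x$ after substitution), and finally Gershgorin on the $(K_c-1)\times(K_c-1)$ principal submatrix together with the cross-correlation bound $(K_c-1)\varepsilon_c^2$. The step you flagged as the main obstacle is handled correctly, since $\bar{\mathbf H}_c\mathbf q^{\rm ZF}_{c,k}=\mathbf e_k$ places $\mathbf q^{\rm ZF}_{c,k}$ in the one-dimensional complement of $\mathcal U_k$ within the channel span.
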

	\begin{proof}
		Please refer to Appendix \ref{AppendixA}.
	\end{proof}
		Lemma \ref{pro1} shows that the group-wise ZF beamformer approaches MRT as the intra-group channels become nearly orthogonal. To quantify this regime, we define the channel-orthogonality margin of group $c$ as
	$\zeta_c\triangleq\sqrt{K_c-1}\varepsilon_c$. Let $\eta_{\rm bf}\ll1$ and $\eta_{\rm int}\ll1$ denote the prescribed tolerances on the ZF-MRT beam-direction deviation and the normalized intra-group interference power under MRT, respectively. Since the latter is upper bounded by $(K_c-1)\varepsilon_c^2$, a conservative channel-correlation threshold is
	\begin{equation}
		\varepsilon_c^{\rm th}
		=
		\min\left\{
		\frac{\eta_{\rm bf}}{2\sqrt{K_c-1}},
		\sqrt{\frac{\eta_{\rm int}}{K_c-1}},
		\frac{1}{2(K_c-1)}
		\right\}.
	\end{equation}
	When $\varepsilon_c\leq\varepsilon_c^{\rm th}$, group-wise ZF and MRT have close beam directions, while the normalized MRT interference remains below $\eta_{\rm int}$. In particular, they become identical when $\varepsilon_c=0$.
Let $R_{c,k}=\lVert\mathbf u_{c,k}-\mathbf t_0\rVert_2$ denote the distance from the array reference point to user $k$ in group $c$. We define
\begin{equation}
	\sin\psi_{c,k}
	=
	\frac{\mathbf e_x^{T}(\mathbf u_{c,k}-\mathbf t_0)}
	{R_{c,k}},
	\Delta s_{k,i}
	=
	\sin\psi_{c,i}-\sin\psi_{c,k},
\end{equation}
where $\mathbf e_x=[1,0,0]^T$. We next analyze the intra-group channel correlation in the far-field and near-field regimes.

\textbf{1) Far-field regime:} In the far-field regime, the normalized intra-group correlation is approximated as
\begin{equation}
	\rho_{k,i}^{(c),\mathrm{FF}}
	\simeq
	\frac{1}{N_c}
	\sum_{m\in\mathcal S_c}
	\exp\left(
	-j\frac{2\pi}{\lambda}
	x_m^{\mathrm l}\Delta s_{k,i}
	\right).
\end{equation}
The actual correlation depends on the complete position set, while the effective aperture determines the available phase spread. 
Let $D_{c,x}^{\mathrm{eff}}\triangleq \max_{m\in\mathcal S_c}x_m^{l}-\min_{m\in\mathcal S_c}x_m^{l}$ denote the effective aperture length of the subarray assigned to group $c$ along the $x$-axis. 
Since exact orthogonality requires the phasors to span at least a semicircle, a necessary far-field condition is
$D_{c,x}^{\mathrm{eff}}
	\left|\Delta s_{k,i}\right|
	\geq
	\frac{\lambda}{2}$,
which yields the minimum aperture scale
\begin{equation}
	D_{k,i}^{\mathrm{orth,FF}}
	=
	\frac{\lambda}
	{2\left|\Delta s_{k,i}\right|}.
\end{equation}

\textbf{2) Near-field regime:} In the near regime, we define
\begin{equation}
	\Delta q_{k,i}
	=
	\frac{\cos^2\psi_{c,i}}{R_{c,i}}
	-
	\frac{\cos^2\psi_{c,k}}{R_{c,k}}.
\end{equation}
Under the Fresnel approximation, an aperture-only necessary condition for exact orthogonality is
\begin{equation}
	D_{k,i}^{\mathrm{orth,NF}}
	=
	\frac{\lambda}
	{
		\left|\Delta s_{k,i}\right|
		+
		\sqrt{
			\left|\Delta s_{k,i}\right|^2
			+
			\frac{\lambda\left|\Delta q_{k,i}\right|}{4}
		}
	}.
\end{equation}
These conditions determine whether exact orthogonality is geometrically possible.
In addition, to obtain a simple aperture scale for initialization, we adopt the conservative resolvability threshold $\chi_c=1/(\pi\varepsilon_c^{\rm th})$ by using the envelope bound $|\operatorname{sinc}(x)|\leq 1/(\pi|x|)$, which gives
\begin{equation}
	\max\left\{
		\frac{
			D_{c,x}^{\mathrm{eff}}
			\left|\Delta s_{k,i}\right|
		}{\lambda},
		\frac{
			\left(D_{c,x}^{\mathrm{eff}}\right)^2
			\left|\Delta q_{k,i}\right|
		}{\lambda}
		\right\}
	\geq
	\chi_c.
\end{equation}

Based on the above, the pairwise aperture scales are approximated as
\begin{align}
	D_{k,i}^{\mathrm{th,FF}}
	=
	\frac{\chi_c\lambda}
	{\left|\Delta s_{k,i}\right|},
	D_{k,i}^{\mathrm{th,NF}}
	=
	\min\left\{
		\frac{\chi_c\lambda}
		{\left|\Delta s_{k,i}\right|},
		\sqrt{
			\frac{\chi_c\lambda}
			{\left|\Delta q_{k,i}\right|}
		}
		\right\}. \nonumber 
\end{align}
The corresponding group-level aperture requirements are
\begin{equation}
	D_{c,x}^{\mathrm{th},\xi}
	=
	\max_{k\neq i}
	D_{k,i}^{\mathrm{th},\xi},
	\xi\in\{\mathrm{FF},\mathrm{NF}\},
\end{equation}
where a zero denominator indicates an inactive separability dimension and gives an infinite requirement. These results motivate assigning a sufficiently large effective aperture to each compact multiuser group and adopting non-periodic sparse positions to reduce intra-group channel correlation.

\vspace{-4mm}
\subsection{RA-Directivity-Induced Inter-Group Leakage Suppression and ZF Approximation}
	\label{III-B}
	\vspace{-1mm}
	
In this subsection, we characterize the role of antenna directivity in controlling undesired inter-group channel coupling. By aligning the boresight directions of activated RAs with their corresponding group centers, the antenna directivity suppresses the effective channel gains toward non-target groups, thereby reducing inter-group interference. We then derive far-field and near-field bounds on the off-diagonal components of the multi-group channel matrix to quantify the leakage suppression capability and establish the conditions for the global-to-group-wise ZF approximation.

	For group $c$, we define the channel matrix from the subarray assigned to
	group $j$ to the users in group $c$ as
	\begin{equation}
		\mathbf H_{c,j}
		=
		\begin{bmatrix}
			\tilde{\mathbf h}_{j, c,1}^{H},
			\tilde{\mathbf h}_{j, c,2}^{H},
			\ldots,
			\tilde{\mathbf h}_{j, c,K_c}^{H}
		\end{bmatrix}^T
		\in\mathbb C^{K_c\times N_j}.
	\end{equation}
	The diagonal block $\mathbf H_{c,c}$ is the desired intra-group
	channel of group $c$, whereas $\mathbf H_{c,j}$, $j\neq c$, represents
	the leakage channel from the subarray assigned to group $j$ to the
	users in group $c$.
	By stacking all groups, the global reduced channel matrix is $	\mathbf H_g$.
	Define its block-diagonal desired part as $\mathbf H_{\rm bd}
	=
	{\rm blkdiag}
	(\mathbf H_{1,1},\mathbf H_{2,2},\ldots,\mathbf H_{C,C})$,
	and the inter-group leakage part as $\mathbf E_{\rm inter}
	=
	\mathbf H_g-\mathbf H_{\rm bd}$.
	Thus, $\mathbf H_g$ is approximately block diagonal if
	$\mathbf E_{\rm inter}$ is sufficiently small. 
	We then define
	$s_{\rm bd}\triangleq\sigma_{\min}(\mathbf H_{\rm bd})$
	and
	$\kappa_{\rm bd}\triangleq
	\|\mathbf H_{\rm bd}\|_2/s_{\rm bd}$.

	\begin{lemma}[Global-to-group-wise ZF precoder deviation under inter-group coupling]
		\label{pro2}
		\upshape
		Suppose that each diagonal submatrix $\mathbf H_{c,c}$
		has full row rank and that
		$\|\mathbf E_{\rm inter}\|_2
		\leq
		\rho s_{\rm bd}$
		for $\rho\in(0,1)$. Then, the global ZF precoder is
		well defined and satisfies
		\begin{equation}
			\left\|
			\mathbf W_g^{\rm ZF}
			-
			\mathbf W_{\rm gw}^{\rm ZF}
			\right\|_2
			\leq
			\frac{
				1+2\kappa_{\rm bd}^2+\rho\kappa_{\rm bd}
			}{
				(1-\rho)^2s_{\rm bd}^2
			}
			\|\mathbf E_{\rm inter}\|_2,
			\label{eq:global_groupwise_zf_bound}
		\end{equation}
			where 
		\begin{align}
			&	\mathbf W_g^{\rm ZF}
			=
			\mathbf H_g^H
			(\mathbf H_g\mathbf H_g^H)^{-1}, \mathbf W_c^{\rm ZF}
			=
			\mathbf H_{c,c}^H
			(\mathbf H_{c,c}\mathbf H_{c,c}^H)^{-1}, \nonumber \\
			&	\mathbf W_{\rm gw}^{\rm ZF}
			=
			{\rm blkdiag}
			\left(
			\mathbf W_1^{\rm ZF},
			\mathbf W_2^{\rm ZF},
			\ldots,
			\mathbf W_C^{\rm ZF}
			\right). \nonumber
		\end{align}
	\end{lemma}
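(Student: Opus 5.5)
The plan is a standard perturbation argument for the right pseudo-inverse, written with $\mathbf H_g=\mathbf H_{\rm bd}+\mathbf E_{\rm inter}$. For brevity write $\mathbf E=\mathbf E_{\rm inter}$, $s=s_{\rm bd}$, $\kappa=\kappa_{\rm bd}$, $\mathbf A_0=\mathbf H_{\rm bd}\mathbf H_{\rm bd}^H$ and $\mathbf A=\mathbf H_g\mathbf H_g^H$.

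\textbf{Step 1: identify the group-wise precoder.} Because $\mathbf H_{\rm bd}$ is block diagonal and each $\mathbf H_{c,c}$ has full row rank, $\mathbf A_0={\rm blkdiag}(\mathbf H_{c,c}\mathbf H_{c,c}^H)$ is invertible. Hence $\mathbf H_{\rm bd}^H\mathbf A_0^{-1}$ is block diagonal with blocks $\mathbf W_c^{\rm ZF}$, that is, $\mathbf W_{\rm gw}^{\rm ZF}=\mathbf H_{\rm bd}^H\mathbf A_0^{-1}$. Here the row and column orderings of $\mathbf H_g$ must match the group stacking, which is implicit in the definition of $\mathbf H_{\rm bd}$. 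Also $\|\mathbf A_0^{-1}\|_2=1/s^2$.

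\textbf{Step 2: well-posedness.} Weyl's inequality for singular values gives
\[
\sigma_{\min}(\mathbf H_g)\ge s-\|\mathbf E\|_2\ge(1-\rho)s>0 .
\]
So $\mathbf H_g$ has full row rank, $\mathbf A$ is invertible, and $\|\mathbf A^{-1}\|_2\le 1/((1-\rho)^2s^2)$. This shows $\mathbf W_g^{\rm ZF}$ is well defined.

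\textbf{Step 3: split the difference.} Write
\[
\mathbf W_g^{\rm ZF}-\mathbf W_{\rm gw}^{\rm ZF}=\mathbf E^H\mathbf A^{-1}+\mathbf H_{\rm bd}^H\left(\mathbf A^{-1}-\mathbf A_0^{-1}\right).
\]
The first term has norm at most $\|\mathbf E\|_2/((1-\rho)^2s^2)$, which produces the ``$1$'' in the numerator of \eqref{eq:global_groupwise_zf_bound}.

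For the second term, use the resolvent identity $\mathbf A^{-1}-\mathbf A_0^{-1}=\mathbf A^{-1}(\mathbf A_0-\mathbf A)\mathbf A_0^{-1}$ together with
\[
\mathbf A-\mathbf A_0=\mathbf H_{\rm bd}\mathbf E^H+\mathbf E\mathbf H_{\rm bd}^H+\mathbf E\mathbf E^H .
\]
Since $\|\mathbf E\|_2\le\rho s$, this gives $\|\mathbf A-\mathbf A_0\|_2\le(2\|\mathbf H_{\rm bd}\|_2+\rho s)\|\mathbf E\|_2$. Therefore the second term is bounded by
\[
\frac{\|\mathbf H_{\rm bd}\|_2\,(2\|\mathbf H_{\rm bd}\|_2+\rho s)}{(1-\rho)^2s^4}\,\|\mathbf E\|_2=\frac{2\kappa^2+\rho\kappa}{(1-\rho)^2s^2}\,\|\mathbf E\|_2 .
\]
Adding the two bounds yields \eqref{eq:global_groupwise_zf_bound} exactly.

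\textbf{Where care is needed.} There is no deep obstacle. The one point to watch is the quadratic term $\mathbf E\mathbf E^H$: it must be absorbed using $\|\mathbf E\|_2\le\rho s$, rather than kept as $\|\mathbf E\|_2^2$, so that the final bound is linear in $\|\mathbf E_{\rm inter}\|_2$ with the stated constant $\rho\kappa_{\rm bd}$. One should also confirm that $\sigma_{\min}$ here means the $\mathrm{rank}$-th (row-count) singular value of the wide matrix, so that the Weyl step in Step 2 applies.
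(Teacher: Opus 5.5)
Your proposal is correct and follows the paper's proof essentially step for step. Weyl's inequality gives $\sigma_{\min}(\mathbf H_g)\ge(1-\rho)s_{\rm bd}$. The difference splits into $\mathbf E_{\rm inter}^H(\mathbf H_g\mathbf H_g^H)^{-1}$ plus $\mathbf H_{\rm bd}^H$ times an inverse perturbation, handled by the same resolvent identity. The quadratic term $\mathbf E_{\rm inter}\mathbf E_{\rm inter}^H$ is absorbed via $\|\mathbf E_{\rm inter}\|_2\le\rho s_{\rm bd}$, yielding the constant $1+2\kappa_{\rm bd}^2+\rho\kappa_{\rm bd}$. Your explicit check that $\mathbf W_{\rm gw}^{\rm ZF}=\mathbf H_{\rm bd}^H(\mathbf H_{\rm bd}\mathbf H_{\rm bd}^H)^{-1}$ follows from the block structure, and your note on the ordering convention, are details the paper asserts without comment.
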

			\begin{proof}
		Please refer to Appendix~\ref{app2}.
	\end{proof}
	\vspace{-2mm}
	Lemma~\ref{pro2} shows that, under weak inter-group coupling, the deviation between the global ZF precoder and the block-diagonal collection of group-wise ZF precoders is bounded linearly in $\|\mathbf E_{\rm inter}\|_2$. We next characterize the attenuation of the off-diagonal components of the multi-group channel matrix induced by RA directionality. For both far-field and near-field propagation, each RA is steered toward the center of its assigned group, while the corresponding departure directions are treated differently in the two regimes.
	
	\textbf{1) Far-field regime:}
	In the far-field regime, the departure directions from all
	antennas in subarray $j$ toward a given user are approximately
	identical. Let
	$\mathbf d_c=\frac{\mathbf u_c-\mathbf t_0}{
	\|\mathbf u_c-\mathbf t_0\|_2}$
	and
	$\mathbf d_{c,k}=\frac{\mathbf u_{c,k}-\mathbf t_0}{
	\|\mathbf u_{c,k}-\mathbf t_0\|_2}$
	denote the reference-point directions toward the center of
	group $c$ and user $k\in\mathcal K_c$, respectively. The RA
	boresights in subarray $j$ satisfy
	$\mathbf f_{m,j}^\star=
	\frac{\mathbf u_{j}-\mathbf t_m}
	{\|\mathbf u_{j}-\mathbf t_m\|_2} \approx \mathbf d_{j}$.
We define the group-center angular separation and radius
	as
	\begin{equation}
		\Theta_{j,c}
		=
		\arccos
		\left(
		\mathbf d_j^T\mathbf d_c
		\right),
		\Delta_c
		=
		\max_{k\in\mathcal K_c}
		\arccos
		\left(
		\mathbf d_c^T\mathbf d_{c,k}
		\right).
	\end{equation}
	The far-field off-boresight angle toward user
	$k\in\mathcal K_c$ is
	$\vartheta_{j,c,k}^{\rm FF}
	=\arccos(\mathbf d_j^T\mathbf d_{c,k})$.
	By the spherical triangle inequality, we get
	\begin{equation}
		\vartheta_{j,c,k}^{\rm FF}
		\geq
		\underline{\vartheta}_{j,c}^{\rm FF}
		\triangleq
		[\Theta_{j,c}-\Delta_c]_+.
	\end{equation}
	Therefore, the worst-case RA amplitude leakage from subarray
	$j$ toward group $c$ is upper-bounded by
	\begin{equation}
		\eta_{j, c}^{\rm RA,FF}
		\triangleq
		\max_{k\in\mathcal K_c}
		\left[
		\cos
		\left(
		\vartheta_{j,c,k}^{\rm FF}
		\right)
		\right]_+^p
		\leq
		\left[
		\cos
		\left(
		\underline{\vartheta}_{j,c}^{\rm FF}
		\right)
		\right]_+^p.
	\end{equation}
	
	\textbf{2) Near-field regime:}
	In the near-field regime, the directions toward the group
	centers and individual users vary across the antennas. For
	$m\in\mathcal S_j$, we define
	\begin{equation}
		\mathbf d_{m,c}
		=
		\frac{\mathbf u_c-\mathbf t_m}
		{\|\mathbf u_c-\mathbf t_m\|_2},
		\mathbf d_{m,c,k}
		=
		\frac{\mathbf u_{c,k}-\mathbf t_m}
		{\|\mathbf u_{c,k}-\mathbf t_m\|_2}.
	\end{equation}
	Since the $m$-th RA in subarray $j$ is steered toward the
	center of its assigned group, we have
	$\mathbf f_{m,j}^{\star}=\mathbf d_{m,j}$. We define
	\begin{equation}
		\Theta_{m,j,c}
		\!\!=\!
		\arccos \!
		\left(\!
		\mathbf d_{m,j}^{T}\mathbf d_{m,c}
		\!\right),
		\Delta_{m,c}
		\!\!=\!
		\max_{k\in\mathcal K_c}
		\arccos \!
		\left(
		\mathbf d_{m,c}^{T}\mathbf d_{m,c,k}
		\right), \nonumber
	\end{equation}
	where $\Theta_{m,j,c}$ denotes the local angular separation
	between the directions from antenna $m$ toward the centers of
	groups $j$ and $c$, while $\Delta_{m,c}$ denotes the local
	angular radius of group $c$.
	
	For user $k\in\mathcal K_c$, the near-field off-boresight
	angle at antenna $m$ is
	\begin{equation}
		\vartheta_{m,j,c,k}^{\rm NF}
		=
		\arccos
		\left(
		(\mathbf f_{m,j}^{\star})^{T}
		\mathbf d_{m,c,k}
		\right).
	\end{equation}
	Using the spherical triangle inequality, we obtain
	\begin{equation}
		\vartheta_{m,j,c,k}^{\rm NF}
		\geq
		\left[
		\Theta_{m,j,c}-\Delta_{m,c}
		\right]_+.
	\end{equation}
	Accordingly, we define the conservative group-level
	off-boresight-angle lower bound as
	\begin{equation}
		\underline{\vartheta}_{j,c}^{\rm NF}
		\triangleq
		\min_{m\in\mathcal S_j}
		\left[
		\Theta_{m,j,c}-\Delta_{m,c}
		\right]_+.
	\end{equation}
	Therefore, the worst-case RA amplitude leakage from subarray
	$j$ toward group $c$ satisfies
	\begin{align}
	\!\!	\eta_{j, c}^{\rm RA,NF}
		&\! \triangleq \!
		\max_{\substack{
				m\in\mathcal S_j,
				k\in\mathcal K_c
		}}\!\!
		\left[
		\cos
		\left(
		\vartheta_{m,j,c,k}^{\rm NF}
		\right)
		\right]_+^p \!\! \leq\!\!
		\left[
		\cos
		\left(
		\underline{\vartheta}_{j,c}^{\rm NF}
		\right)
		\right]_+^p.
	\end{align}

	Based on the above far-field and near-field characterizations,
	the conservative group-level angular-separation lower bound is
	defined as
	\begin{equation}
		\underline{\vartheta}_{j,c}^{\xi}
		=
		\begin{cases}
			[\Theta_{j,c}-\Delta_c]_+,
			& \xi={\rm FF},
			\\[1mm]
			\displaystyle
			\min_{m\in\mathcal S_j}
			[\Theta_{m,j,c}-\Delta_{m,c}]_+,
			& \xi={\rm NF}.
		\end{cases}
	\end{equation}
	Therefore, the worst-case inter-group RA amplitude leakage is upper-bounded by
	\begin{equation}
		\eta_{\rm RA}^{\rm inter,\xi}
		\triangleq
		\max_{c\in\mathcal C}
		\max_{j\neq c}
		\left[
		\cos
		\left(
		\underline{\vartheta}_{j,c}^{\xi}
		\right)
		\right]_+^p,
		\xi\in\{{\rm FF},{\rm NF}\}.
	\end{equation}
	Applying $\Vert\mathbf H\Vert_2\leq\Vert\mathbf H\Vert_{\rm F}$, the off-diagonal LoS channel component satisfies
	$\left\Vert\mathbf H_{c,j}\right\Vert_2
		\leq
		\alpha_{c,j}^{\xi}
		{\eta}_{\rm RA}^{\rm inter,\xi},
		 c\neq j$,
	where $\alpha_{c,j}^{\xi}$ characterizes the path-loss- and antenna-number-dependent baseline block gain with
	\begin{equation}
		\alpha_{c,j}^{\xi}
		=
		\begin{cases}
			\displaystyle
			\left(
			\beta_0G_0N_j
			\sum\nolimits_{k\in\mathcal K_c}R_{c,k}^{-2}
			\right)^{1/2},
			& \xi={\rm FF},\\
			\displaystyle
			\left(
			\beta_0G_0
			\sum\nolimits_{k\in\mathcal K_c}
			\sum\nolimits_{m\in\mathcal S_j}
			r_{m,c,k}^{-2}
			\right)^{1/2},	& \xi={\rm NF}. \nonumber 
		\end{cases}
	\end{equation}
	Consequently, the
	inter-group leakage matrix satisfies
	\begin{equation}
		\|{\bf E}_{\rm inter}\|_2
		\le
		\alpha_{\rm inter}^{\xi}
		{\eta}_{\rm RA}^{\rm inter,\xi},
	\end{equation}
	where $\alpha_{\rm inter}^{\xi}
	\triangleq
	(
	\sum_{c\in\mathcal C}
	\sum_{j\neq c}
	(\alpha_{c,j}^{\xi})^2
	)^{1/2}$ collects the aggregate
	off-diagonal block gains. 
	Therefore, a sufficient low-leakage condition for applying Lemma \ref{pro2} is
	\begin{equation}
		\!\!\!	\alpha_{\rm inter}^{\xi}{\eta}_{\rm RA}^{\rm inter,\xi}
		\le
		\rho \sigma_{\min}(\mathbf H_{\rm bd}),
		0<\rho<1,
		\xi\in\{\mathrm{FF},\mathrm{NF}\}.
	\end{equation}
	Under this condition, the global reduced channel matrix is
	approximately block diagonal, and the global ZF precoder
	approaches the collection of group-wise ZF precoders. 
	Thus, increasing the directivity parameter $p$ reduces the normalized off-boresight RA leakage factor in both regimes. In particular, under the clipped RA pattern, $\vartheta_{j,c}^{\xi}\geq\pi/2$, $\xi\in\{{\rm FF},{\rm NF}\}$, yields zero LoS leakage from subarray $j$ to group $c$. Therefore, RA directionality suppresses the residual sidelobe and grating-lobe leakage induced by sparse aperture deployment.
	\begin{remark}[Approximate Group-Wise Beamforming Structure]
		\label{remark:groupwise_structure}
		\upshape
		Combining Lemmas~ \ref{pro1} and \ref{pro2}, when inter-group coupling and
		intra-group channel correlation are both sufficiently small, the global ZF
		beam directions can be approximated by the group-wise MRT directions.
		This observation highlights the capability of the SRA architecture in
		jointly exploiting RA directionality for inter-group leakage suppression
		and sparse-aperture selection for intra-group channel orthogonality.
	\end{remark}

\vspace{-2mm}
	To illustrate the RA-induced grating-lobe suppression, we consider a far-field uniformly sparse ULA, where the element-wise pointing directions approximately reduce to a common group-center direction $\psi_c^\star$. Its effective power pattern is
\[
P_c^{\rm eff}(\psi)
=
P_c^{\rm geo}(\psi)
\left[\cos(\psi-\psi_c^\star)\right]_{+}^{2p},
\]
where $P_c^{\rm geo}(\psi)$ denotes the geometry-only sparse-array pattern. For an inter-element spacing $d_s=q\lambda/2$ and broadside pointing $\psi_c^\star=0^\circ$, the first-order grating-lobe offset and its residual RA power factor are
\[
\sin \psi_{c,\ell}^{\rm GL}
=
\sin \psi_c^\star+\ell\frac{\lambda}{d_s}, \Delta\psi_{\rm GL}=\arcsin(2/q),
\ell= \pm 1.
\]
	Since the normalized geometry-only power equals one at a grating lobe, the corresponding residual RA amplitude and power factors are
	\begin{align}
		\eta_{\rm RA}^{\rm GL,amp}
		&=
		\left(1-{4}/{q^2}\right)^{p/2},
		\
		\eta_{\rm RA}^{\rm GL,pow}
		&=
		\left(1-{4}/{q^2}\right)^p.
	\end{align}
	Let $\eta_{\rm GL}^{\rm th}\in(0,1)$ denote the maximum tolerable normalized grating-lobe power relative to the main-lobe peak.
	To ensure
	$\eta_{\rm RA}^{\rm GL,pow}\leq\eta_{\rm GL}^{\rm th}$,
	the RA directivity and sparse-spacing factor should satisfy
	\begin{equation}
		p
		\geq
		\frac{\ln\eta_{\rm GL}^{\rm th}}
		{\ln\left(1-{4}/{q^2}\right)}, \ \text{or} \
		q
		\leq
		\frac{2}
		{\sqrt{1-\left(\eta_{\rm GL}^{\rm th}\right)^{1/p}}}.
	\end{equation}
	Fig.~\ref{fig:GL_RA_attenuation} illustrates the first-order
	grating-lobe offset and normalized RA gain versus the spacing
	factor $q$ for $p=3$. As $q$ increases, the grating lobe
	approaches the boresight and its normalized RA gain increases
	toward $0$~dB. This result reveals a tradeoff
		between aperture enlargement and grating-lobe suppression.
		For a fixed number of antennas, a larger sparse spacing
		enlarges the effective aperture but shifts the grating lobe
		closer to the RA main beam, thereby increasing its residual
		gain.

	\begin{figure}[t]
		\centering
		\includegraphics[width=0.4\textwidth]{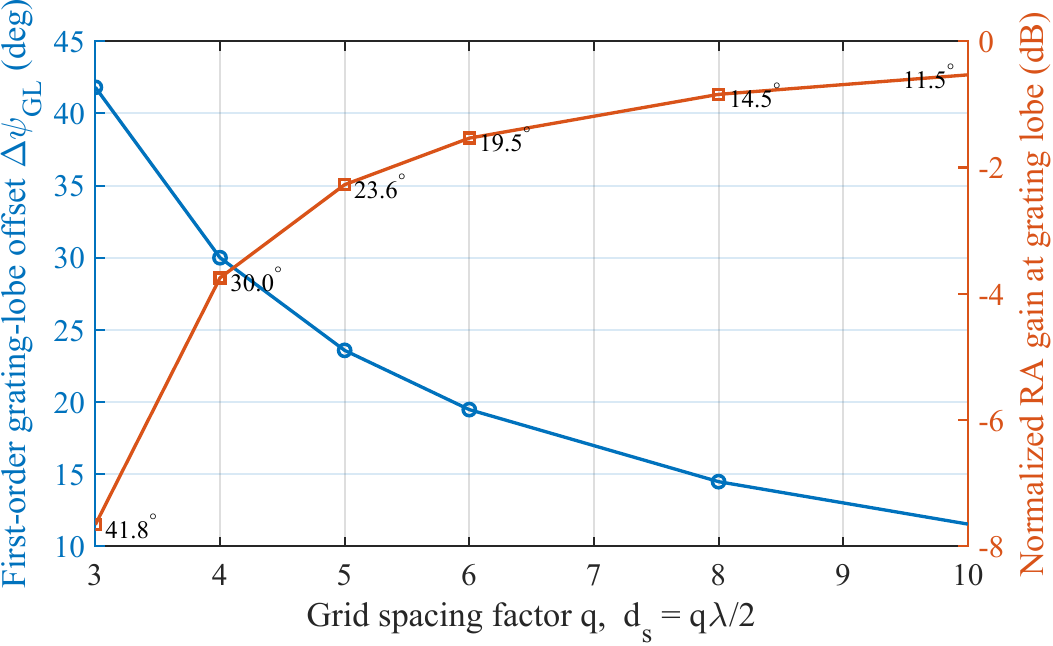}
	\caption{First-order grating-lobe offset and normalized RA gain versus the spacing factor $q$, where the RA gain is evaluated at the grating-lobe direction, $d_s=q\lambda/2$, and $p=3$.}
		\label{fig:GL_RA_attenuation}
			\vspace{-6mm}
	\end{figure}
	
	\vspace{-4mm}
\subsection{Principle-Guided Sparse Aperture Configuration}
\vspace{-1mm}

In this subsection, we develop SRA configuration guidelines for group-wise antenna-number allocation and sparse-position selection based on the preceding spatial analyses.

	\subsubsection{Approximate Antenna-Number Allocation}
	When the intra-group channels are nearly orthogonal and the inter-group leakage is weak, under approximately equal intra-group power allocation and a common noise variance $\sigma^2$, the worst-user SINR of group $c$ can be approximated as
	\begin{equation}
		\label{gammac}
		\frac{\gamma_c^{\min}}{\Gamma_c}
		\approx
		\frac{P_cN_c}{a_c\sigma^2},
		a_c\triangleq\frac{\Gamma_cK_c}{\bar g_c},
	\end{equation}
	where $P_c$ and $\bar g_c$ denote the group-level transmit power and worst-user reference per-antenna gain, respectively, with 
	\begin{equation}
		\bar g_c
		\triangleq
		\min_{k\in\mathcal K_c}
		\frac{1}{M}
		\sum_{m=1}^{M}
		\frac{\beta_0G_0}{r_{m,c,k}^{2}}
		\left[
		(\mathbf f_{m,c}^{\star})^{T}
		\mathbf d_{m,c,k}
		\right]_{+}^{2p}.
	\end{equation} 
	Under approximation \eqref{gammac}, equalizing the weighted SINR levels and using $\sum_c P_c=P_{\max}$ gives
$t
		=
		\frac{P_{\max}}
		{\sigma^2\sum_{c\in\mathcal C}a_c/N_c}$.
	Thus, the continuous antenna-number allocation is obtained by minimizing $\sum_c a_c/N_c$ subject to $\sum_cN_c=N$. The corresponding Karush-Kuhn-Tucker (KKT) solution is
	\begin{equation}
		N_c^{\mathrm{cont}}
		=
		\frac{\sqrt{a_c}}
		{\displaystyle\sum\nolimits_{j\in\mathcal C}\sqrt{a_j}}N,
		a_c=\frac{\Gamma_cK_c}{\bar g_c}.
		\label{Ncont}
	\end{equation}
	Thus, more antennas are assigned to groups with higher priority, larger user loads, or weaker reference gains.

	The allocation must also satisfy the minimum array-size and spatial-DoF requirements. Defining
$N_c^{\rm req}
		=
		\max\{N_c^{\min},K_c\}$,
	the constrained continuous allocation is written as
	\begin{align}
		\label{Nc}
		N_c^\star
		=
		\max\left\{
		N_c^{\rm req},
		\sqrt{{a_c}/{\mu}}
		\right\},
	\end{align}
	where $\mu$ is chosen such that
	$\sum_{c=1}^{C}N_c^\star=N$. The spatial-resolution requirement is incorporated through the target aperture length $D_{c,x}^{\star}$ in the subsequent position initialization. Finally, $\{N_c^\star\}$ are rounded to integers while preserving the total antenna number.
	
	\subsubsection{Approximate Antenna-Position Allocation}
	
	After determining $\{N_c\}_{c=1}^{C}$, the position sets $\{\mathcal S_c\}_{c=1}^{C}$ are initialized to provide sufficient intra-group resolution while mitigating sparse-aperture ambiguity and inter-group leakage. We present the construction for a ULA along the $x$-axis, which can be extended to a uniform planar array (UPA) along the dominant resolution dimension or over a two-dimensional aperture.
	
	According to the far-field and near-field orthogonalization conditions in Section~III-A, the required aperture of group $c$ is selected as
	\begin{equation}
		D_{c,x}^{\rm req}
		=
		\begin{cases}
			D_{c,x}^{\rm th,FF}, & \text{far-field design},\\
			D_{c,x}^{\rm th,NF}, & \text{near-field design}.
		\end{cases}
	\end{equation}
	Let
	$D_{\max}=(M-1)\Delta$
	denote the maximum candidate-aperture length, and let
	$D_{c,\min}=(N_c-1)\Delta$
	denote the minimum span required for $N_c$ distinct positions. The target aperture is then
	\begin{align}
		D_{c,x}^{\star}
		=
		\min\left\{
		D_{\max},
		\max\left\{
		D_{c,\min},
		D_{c,x}^{\rm req}
		\right\}
		\right\}.
	\end{align}
	Given $N_c$ and $D_{c,x}^{\star}$, the continuous positions are generated by
	\begin{equation}
		\label{AntennaPosition}
	\!	x_{c,n}^{\rm cont}
		\!=\!
		\!-\! \frac{D_{c,x}^{\star}}{2}
		\!+\!
		\frac{(n-1)D_{c,x}^{\star}}{N_c \!\!-\!\!1}
		\!+\!
		\eta_{\rm jit}
		\frac{D_{c,x}^{\star}}{N_c \!\!-\!\! 1}
		\omega_{c,n},
		 n \!= \!1,\ldots,N_c,
	\end{equation}
	where
	\begin{equation}
		\omega_{c,n}
		=
		\begin{cases}
			0, & n=1 \ \text{or} \ n=N_c,\\
			2\{(n-1)\phi_g+s_c\}-1,
			& 2\leq n\leq N_c-1,
		\end{cases}
		\label{jitter_sequence}
	\end{equation}
where $\{\cdot\}$ denotes the fractional part, $\phi_g=(\sqrt{5}-1)/2$ is the golden-ratio constant, $s_c$ is a group-dependent shift, and $\eta_{\rm jit}\in(0,1/2)$ controls the perturbation strength. This construction spreads the antennas over the target aperture while avoiding strictly periodic sparse layouts.
	Let $\mathcal M=\{1,\ldots,M\}$ denote the candidate-position index set. Each continuous position is projected onto the nearest unoccupied grid point according to
	\begin{align}
		\label{Projection}
		m_{c,n}
		=
		\arg\min_{m\in\mathcal M_{\rm free}}
		\left|
		\left(m-\frac{M+1}{2}\right)\Delta
		-
		x_{c,n}^{\rm cont}
		\right|^2,
	\end{align}
	where $\mathcal M_{\rm free}\subseteq\mathcal M$ is updated after each projection. The resulting allocation provides a physically meaningful initialization, which is subsequently evaluated and refined using the true weighted max-min SINR objective.

\vspace{-3mm}
	\section{Proposed Solution}
	\label{Proposed Solution}
	\vspace{-2mm}

In this section, we develop a practical low-complexity AO framework for problem~\eqref{P0} guided by the spatial design principles of SRA established in Section~\ref{Performance Analysis}. To reduce the coupled optimization complexity among sparse aperture allocation, RA orientations, and transmit beamforming, we exploit the RA directionality insight in Section~\ref{III-B} and adopt a projected group-center pointing rule for the activated RAs, whose detailed design is presented in Section~\ref{IV-A}. After configuring the RA orientations based on the derived pointing rule, we introduce an auxiliary variable $\tau$ and reformulate the resulting problem as

	\begin{subequations}
			\label{P1}
			\begin{eqnarray}
					&\!\!\!\!\!\!
					\max \limits_{\mathbf B_{c},\mathbf w_{c,k}, \tau}
					&\!\!\! \tau \label{P0_obj}\\
					&\!\!\!\!\!\! \mathrm{s.t.}
					&\!\!\! \gamma_{c,k}(\mathbf B_{c},\mathbf w_{c,k})
					\geq \tau \Gamma_c, \forall c,\ k\in\mathcal K_c, \label{P0_sinr}\\
					&&\!\!\! \mathrm{(\ref{P1_power})}-\mathrm{(\ref{P1_min_ant})}.
				\end{eqnarray}
		\end{subequations} 
Problem~\eqref{P1} remains challenging because the binary allocation variables jointly determine the antenna numbers, sparse positions, projected boresights, and effective channels, leading to a combinatorial search space. Although the beamforming subproblem for a fixed allocation can be efficiently solved via bisection and SOCP \cite{BoydConvex}, exhaustive search over all allocations is computationally prohibitive. Therefore, we develop an AO framework that updates the sparse antenna allocation through a sampled multi-start search and evaluates each candidate allocation by solving the corresponding reduced-dimensional beamforming problem.

\vspace{-4mm}
	\subsection{Projected Group-Center RA Orientation Rule}
	\label{IV-A}
	\vspace{-1mm}
	
We first derive the structured RA orientation rule adopted in problem~\eqref{P1}. For a fixed antenna allocation, jointly optimizing RA orientations and transmit beamforming leads to a highly coupled design problem. To address this issue, we propose a low-complexity projected group-center orientation rule, where each activated RA maximizes its directional gain toward the center of its assigned service group subject to the practical rotation constraint. This rule is optimal for a singleton group and provides a representative pointing direction for compact multiuser groups. Moreover, as characterized in Section~\ref{III-B}, aligning the boresight directions of activated RAs with their assigned group centers reduces the effective channel gains toward non-target groups, thereby suppressing inter-group interference. 
For a given antenna allocation, the antenna set assigned to
service group $c$ is denoted by
	\begin{equation}
		\mathcal S_c=\{m:b_{m,c}=1\}, |\mathcal S_c|=N_c .
	\end{equation}
	For the $m$-th RA assigned to service group $c$, its boresight direction is set as
	\begin{equation}
		\mathbf f_{m,c}^{0}
		=
		\frac{\mathbf u_c-\mathbf t_m}
		{\Vert \mathbf u_c-\mathbf t_m\Vert_2},
		 m\in\mathcal S_c, c\in\mathcal C .
	\end{equation}
	The corresponding azimuth and zenith angles are given by
	\begin{align}
		\label{theta_closed}
			&\theta_{m,c}^{0}
	=
		\operatorname{atan2}(y_c-y_m,x_c-x_m),\\
		\label{phi_closed}
		&\phi_{m,c}^{0}
		=
		\operatorname{atan2}
	(
		\sqrt{(x_c-x_m)^2+(y_c-y_m)^2},
		z_c-z_m
	).
	\end{align}
	If the above pointing direction lies within the feasible rotation range, i.e., $\phi_{m,c}^{0}\leq \phi_{\max}$, the RA is directly steered toward the group center. Otherwise, the RA orientation is projected onto the boundary of the feasible rotation cone while keeping the same azimuth direction. Therefore, the adopted zenith angle is given by
	$\phi_{m,c}^{\star}=
		\min\{\phi_{m,c}^{0},\phi_{\max}\}$.
	The resulting feasible boresight direction is
	\begin{equation}
				\label{fm}
	\!\!\mathbf f_m \!=\!
		\mathbf f_{m,c}^{\star} \!=\!
		\left[
		\sin\phi_{m,c}^{\star}\cos\theta_{m,c}^{0},
		\sin\phi_{m,c}^{\star}\sin\theta_{m,c}^{0},
		\cos\phi_{m,c}^{\star}
		\right]^T. \nonumber 
	\end{equation}
	This projected group-center pointing rule always satisfies the practical rotation constraint \eqref{P1_fm}, and reduces to the direct group-center pointing rule when the group center is located inside the feasible rotation cone.

	In the considered system, the local departure direction from antenna $m$ to user $k\in\mathcal K_c$ is
$\mathbf d_{m,c,k}
=
\frac{\mathbf u_{c,k}-\mathbf t_m}
{\Vert \mathbf u_{c,k}-\mathbf t_m \Vert_2}$.
	To characterize the directional service quality provided by the projected group-center configuration, we define the worst-case off-boresight angle within service group $c$ as
	\begin{equation}
	\label{violation}
	\!\delta_c^{\max}
	\!=\!
	\!\!\max_{m\in\mathcal S_c, k\in\mathcal K_c}
	\arccos
	\left(
	(\mathbf f_{m,c}^{\star})^T\mathbf d_{m,c,k}
	\right),
	c \!\in \!\mathcal C.
\end{equation}
	Accordingly, the minimum relative RA power gain experienced by users in group $c$ is $\left[\cos\left(\delta_c^{\max}\right)\right]_{+}^{2p}$. For illustration, we consider a circular group of radius $6$ m whose center is $40$ m from the array reference point, with $p=3$. The reference-point approximation gives $\delta_c^{\max}\approx\arcsin(6/40)=8.63^\circ$, corresponding to a worst-case directional-gain loss of only about $0.30$ dB. 
	This indicates that the group-center pointing rule provides effective directional coverage for compact groups, validating its use as a low-complexity orientation design strategy.

	\vspace{-4mm}
	\subsection{Reduced-Dimensional Beamforming}
	\vspace{-1mm}
	For a given antenna allocation $\mathbf B$, the corresponding
	RA orientations are determined by the projected group-center
	rule in Section~\ref{IV-A}. Then, we optimize the
	transmit beamformers over the selected group-specific
	subarrays.
	By introducing the selection matrix
	$\mathbf E_c
		=
		[\mathbf e_{s_{c,1}},\ldots,\mathbf e_{s_{c,N_c}}]
		\in\{0,1\}^{M\times N_c}$,
	where $\mathcal S_c=\{s_{c,1},\ldots,s_{c,N_c}\}$ and $\mathbf e_m$ is
	the $m$-th canonical basis vector, the full-dimensional beamformer can
	 be given by
	\begin{align}
		\mathbf w_{c,k}
		=
		\mathbf E_c\mathbf v_{c,k},
		\label{eq:w_Ev}
	\end{align}
	where $\mathbf v_{c,k}\in\mathbb C^{N_c\times 1}$
	is the reduced-dimensional beamforming vector. Since
	$\mathbf E_c^H\mathbf E_c=\mathbf I_{N_c}$, we have $\|\mathbf w_{c,k}\|_2^2
	=
	\|\mathbf v_{c,k}\|_2^2$.
	Therefore, the support constraint
	$\mathbf B_c\mathbf w_{c,k}=\mathbf w_{c,k}$ is automatically satisfied
	by \eqref{eq:w_Ev}.
	For user $k$ in group $c$, we define the reduced channel from the
	subarray of group $j$ to the user as
	\begin{align}
		\tilde{\mathbf h}_{j, c,k}
		=
		\mathbf E_j^H
		\mathbf h_{c,k}(\mathbf B,\{\mathbf f_{m,\ell}^\star\})
		\in\mathbb C^{N_j\times 1}.
	\end{align}
	Then, we have $\mathbf h_{c,k}^H\mathbf w_{j,i}
	=
	\tilde{\mathbf h}_{j, c,k}^H
	\mathbf v_{j,i}$.
	The SINR of user $k$ in group $c$ can be rewritten as
	\begin{align}
		\!\!\!\! \gamma_{c,k}
		\!=\!
		\frac{
			\left|
			\tilde{\mathbf h}_{c, c,k}^H
			\mathbf v_{c,k}
			\right|^2
		}{
			\sum\limits_{i\in\mathcal K_c,i\neq k}
			\left|
			\tilde{\mathbf h}_{c, c,k}^H
			\mathbf v_{c,i}
			\right|^2
			\!\!+\!\!
			\sum\limits_{j\neq c}
			\sum\limits_{i\in\mathcal K_j}
			\left|
			\tilde{\mathbf h}_{j, c,k}^H
			\mathbf v_{j,i}
			\right|^2
			\!\!+\!
			\sigma_k^2
		}.
		\label{eq:SINR_reduced}
	\end{align}
	Fixed $\mathbf B$ and $\mathbf f_{m,c}^\star$, the reduced beamforming
	subproblem is
	\begin{subequations}
		\label{P1_W}
		\begin{eqnarray}
			&\max \limits_{\mathbf v_{c,k},\tau}
			& \tau \\
			&\mathrm{s.t.}
			& \gamma_{c,k}\geq \tau\Gamma_c,
			\forall c,\ k\in\mathcal K_c, \label{PV_sinr}\\
			&& \sum\nolimits_{c=1}^{C}\sum\nolimits_{k=1}^{\mathcal K_c}
			\|\mathbf v_{c,k}\|_2^2
			\leq P_{\max}. \label{PV_power}
		\end{eqnarray}
	\end{subequations}
	Problem \eqref{P1_W} is quasi-convex with respect to $\tau$.
	For a fixed $\tau$, the SINR constraint in \eqref{PV_sinr} can be written
	as
	\begin{align}
		|
		\tilde{\mathbf h}_{c, c,k}^H
		\mathbf v_{c,k}
		|^2
		\geq
		\tau\Gamma_c
		\left(
		\sum_{(j,i)\neq(c,k)}
		|
		\tilde{\mathbf h}_{j, c,k}^H
		\mathbf v_{j,i}
		|^2
		+
		\sigma_k^2
		\right).
	\end{align}
	Since the phase of $\mathbf v_{c,k}$ can be freely rotated without
	changing the SINR or the transmit power, we impose
	\begin{align}
		\label{SOCP-CON}
		\operatorname{Im}
		\left\{
		\tilde{\mathbf h}_{c, c,k}^H
		\mathbf v_{c,k}
		\right\}=0,
		\operatorname{Re}
		\left\{
		\tilde{\mathbf h}_{c, c,k}^H
		\mathbf v_{c,k}
		\right\}\geq 0.
	\end{align}
	Then, for fixed $\tau$, each SINR constraint is equivalent to the
	second-order cone (SOC) constraint
	\begin{align}
		\left\|
		\begin{bmatrix}
			\left\{
			\tilde{\mathbf h}_{j, c,k}^H
			\mathbf v_{j,i}
			\right\}_{(j,i)\neq(c,k)}
			\\
			\sigma_k
		\end{bmatrix}
		\right\|_2
		\!\! \leq \!\!
		\frac{1}{\sqrt{\tau\Gamma_c}}
		\! \operatorname{Re}
		\left\{ \!
		\tilde{\mathbf h}_{c, c,k}^H
		\mathbf v_{c,k}
		\! \right\}.
		\label{eq:SOCP_constraint}
	\end{align}
	Therefore, for a given $\tau$, the feasibility problem is
	\begin{subequations}
		\label{P1_W1}
		\begin{eqnarray}
			&\!\!\!\!\!\!\!\!\!\!\!\! \mathrm{find}
			& \{\mathbf v_{c,k}\}\\
			&\!\!\!\!\!\!\!\!\!\!\!\! \mathrm{s.t.}
			& \eqref{PV_power},\eqref{SOCP-CON},\eqref{eq:SOCP_constraint},
			\forall c, k\in\mathcal K_c.
		\end{eqnarray}
	\end{subequations}
	Problem \eqref{P1_W1} is a standard SOCP feasibility problem. Hence, the optimal $\mathbf{w}_{c,k}$ for a given $\mathbf B$ can be obtained by
	bisection over $\tau$. The initial bisection interval can be set as
	\begin{align}
		\tau_{\min}=0,
		\tau_{\max}
		=
		\min_{c,k}
		{
			P_{\max}\|\tilde{\mathbf h}_{c, c,k}\|_2^2
		}/{
			\Gamma_c\sigma_k^2
		},
	\end{align}
	where $\tau_{\max}$ is an interference-free upper bound. The lower bound $\tau_{\min}=0$ is used only for initializing the bisection,
	and the SOCP feasibility test is performed for $\tau>0$.
	
	\vspace{-4mm}
\subsection{Sampled Multi-Start Antenna Allocation}
\vspace{-1mm}
	Let $\tau(\mathbf B)$ denote the optimal weighted max-min
	SINR returned by beamforming problem \eqref{P1_W} for a
	feasible allocation $\mathbf B$. Then, we optimize the
	antenna allocation as
	\begin{eqnarray}
		\label{P1_B}
		&
		\max \limits_{\mathbf B}
		\ \tau(\mathbf B) 
		\quad \mathrm{s.t.}
		\ \eqref{P1_total_ant}-\eqref{P1_min_ant}.
	\end{eqnarray}
To address problem \eqref{P1_B}, we adopt a sampled multi-start greedy search \cite{Yun2026}.
	For the $\ell$-th start, the initial allocation $\mathbf B_{\ell}^{(0)}$ is generated using the antenna-number rules in \eqref{Ncont}-\eqref{Nc} and the non-periodic position initialization in \eqref{AntennaPosition}-\eqref{Projection}. The same analysis-guided antenna numbers are retained across different starts, while distinct initial position patterns are generated by independently selecting the group-dependent shifts $s_c$ in \eqref{jitter_sequence}. For notational simplicity, the start index $\ell$ is omitted during each local search, and $\mathbf B^{(r)}$ denotes the current allocation at outer iteration $r$.
	Let ${\cal S}_c^{(r)}
		=
		\{
			m:b_{m,c}^{(r)}=1
			\},
		N_c^{(r)}
		=
		|{\cal S}_c^{(r)}|$,
	and define the currently unoccupied candidate-position set as
	\begin{equation}
		{\cal M}_{\rm free}^{(r)}
		=
		\{1,\ldots,M\}
		\setminus
		\bigcup\nolimits_{c=1}^{C}{\cal S}_c^{(r)}.
	\end{equation}
	At iteration $r$, a candidate allocation $\widetilde{\mathbf B}$ is generated from $\mathbf B^{(r)}$ through one of the following three local moves. The antenna sets associated with $\widetilde{\mathbf B}$ are denoted by $\{\widetilde{\cal S}_c\}_{c=1}^{C}$. The corresponding binary allocation is defined by $\widetilde b_{m,c}=1$ if $m\in\widetilde{\mathcal S}_c$ and $\widetilde b_{m,c}=0$ otherwise.
	
	The replacement move substitutes one occupied position of group $c$ with an unoccupied position while preserving $N_c^{(r)}$
	\begin{equation}
		\widetilde{\cal S}_c
		=
		{\cal S}_c^{(r)}
		\setminus
		\{m_{\rm old}\}
		\cup
		\{m_{\rm new}\},
		\widetilde{\cal S}_i
		=
		{\cal S}_i^{(r)}, i\ne c,
	\end{equation}
	where $m_{\rm old}\in{\cal S}_c^{(r)}$ and $m_{\rm new}\in{\cal M}_{\rm free}^{(r)}$.
	The swap move exchanges one occupied position between groups $c$ and $j$
	\begin{align}
		\widetilde{\cal S}_c
		&=
		{\cal S}_c^{(r)}
		\setminus
		\{m_c\}
		\cup
		\{m_j\},
		\widetilde{\cal S}_j
		=
		{\cal S}_j^{(r)}
		\setminus
		\{m_j\}
		\cup
		\{m_c\},
	\end{align}
	where $m_c\in{\cal S}_c^{(r)}$, $m_j\in{\cal S}_j^{(r)}$, and $c\ne j$.
	The reallocation move transfers one antenna from group $c$ to group $j$ when $N_c^{(r)}>N_c^{\rm req}$
	\begin{align}
		\widetilde{\cal S}_c
		&=
		{\cal S}_c^{(r)}
		\setminus
	\{m_c\},
		\widetilde{\cal S}_j
		=
		{\cal S}_j^{(r)}
		\cup
		\{m_c\},
	\end{align}
	where $m_c\in{\cal S}_c^{(r)}$ and $c\ne j$. This move changes the group-wise antenna numbers from $(N_c^{(r)},N_j^{(r)})$ to $(N_c^{(r)}-1,N_j^{(r)}+1)$. For the swap and reallocation moves, all unspecified group sets remain unchanged.
Candidates that violate constraints~\eqref{P1_total_ant}--\eqref{P1_min_ant} or duplicate another allocation in the current sampled neighborhood are discarded.
	Let ${\cal N}_{\rm rep}^{(r)}$, ${\cal N}_{\rm swap}^{(r)}$, and ${\cal N}_{\rm rea}^{(r)}$ denote the retained candidate allocations generated by the replacement, swap, and reallocation moves, respectively. The sampled neighborhood of $\mathbf B^{(r)}$ is then
	\begin{align}
		&{\cal N}_L\bigl(\mathbf B^{(r)}\bigr)
		=
		{\cal N}_{\rm rep}^{(r)}
		\cup
		{\cal N}_{\rm swap}^{(r)}
		\cup
		{\cal N}_{\rm rea}^{(r)},\\
		&\left|{\cal N}_{\rm rep}^{(r)}\right|
		\leq L_{\rm rep},
		\left|{\cal N}_{\rm swap}^{(r)}\right|
		\leq L_{\rm swap},
		\left|{\cal N}_{\rm rea}^{(r)}\right|
		\leq L_{\rm rea},\\
		&\left|
		{\cal N}_L\bigl(\mathbf B^{(r)}\bigr)
		\right|
		\leq
		L_{\rm neigh}
		=
		L_{\rm rep}
		+
		L_{\rm swap}
		+
		L_{\rm rea}.
	\end{align}
	Each candidate $\widetilde{\mathbf B}\in{\cal N}_L(\mathbf B^{(r)})$ is evaluated using the bisection-SOCP solver. The best candidate at iteration $r$ is selected as $
		\widehat{\mathbf B}^{(r)}
		=
		\arg\max_{\widetilde{\mathbf B}
			\in
			{\cal N}_L(\mathbf B^{(r)})}
		\tau(\widetilde{\mathbf B})$.
	To avoid accepting numerical fluctuations, $\epsilon_{\rm acc}>0$ is introduced. The allocation is updated only if the objective improvement exceeds $\epsilon_{\rm acc}$
	\begin{equation}
		\mathbf B^{(r+1)}
		=
		\begin{cases}
			\widehat{\mathbf B}^{(r)},
			&
			\tau\bigl(\widehat{\mathbf B}^{(r)}\bigr)
			>
			\tau\bigl(\mathbf B^{(r)}\bigr)
			+
			\epsilon_{\rm acc},\\
			\mathbf B^{(r)},
			&
			\text{otherwise}.
		\end{cases}
	\end{equation}
	The current local search terminates when the sampled neighborhood is empty, no improving candidate is found, or $I_{\max}^{\rm out}$ iterations are reached.
	
	The above procedure is repeated for $S_{\max}$ starts. Let $\mathbf B_{\ell}^{\rm loc}$ denote the final allocation obtained from the $\ell$-th start. The best start is selected as
	$\ell^\star
		=
		\arg\max_{\ell=1,\ldots,S_{\max}}
		\tau\bigl(\mathbf B_{\ell}^{\rm loc}\bigr)$
	and the final allocation is $\mathbf B^\star
		=
		\mathbf B_{\ell^\star}^{\rm loc}$.
	The beamforming problem \eqref{P1_W1} is then re-solved under $\mathbf B^\star$ to obtain ${\mathbf v_{c,k}^\star}$, and the full-dimensional beamformers are recovered as
	\begin{equation}
		\mathbf w_{c,k}^\star
		=
		\mathbf E_c(\mathbf B^\star)
		\mathbf v_{c,k}^\star,
		\forall c\in{\cal C},\ k\in{\cal K}_c.
	\end{equation}
	By optimizing the beamformers in the reduced dimensions $\{N_c\}$, the proposed framework substantially lowers the cost of the SOCP-based optimization.

\vspace{-4mm}
	\subsection{Convergence and Computational Complexity Analysis}
	\vspace{-1mm}
	For each accepted antenna-allocation update, the acceptance criterion guarantees
	$\tau\big(\mathbf B^{(r+1)}\big)
		>
		\tau\big(\mathbf B^{(r)}\big)
		+\epsilon_{\rm acc}$.
	Therefore, the objective value increases monotonically along each local search. Since the feasible antenna-allocation set is finite, each search terminates after a finite number of accepted updates. The multi-start strategy preserves this finite-termination property while reducing the risk of obtaining a poor local solution.
	
	We next analyze the computational complexity. For a fixed allocation $\mathbf B$, let
	$K=\sum\nolimits_{c=1}^{C}K_c$ and $
		n_v(\mathbf B)
		=
		2\sum\nolimits_{c=1}^{C}K_cN_c$
	denote the number of users and real beamforming variables, respectively. For a fixed $\tau$, problem~\eqref{P1_W1} contains $K$ SINR SOC constraints of dimension $\mathcal O(K)$ and one total-power SOC constraint of dimension $\mathcal O(n_v(\mathbf B))$. Therefore, the complexity of one interior-point feasibility test is
	\begin{equation}
		C_{\rm IP}(\mathbf B)
		=
		\mathcal O\left(
		I_{\rm IP}
		\left[
		n_v^3(\mathbf B)
		+
		n_v^2(\mathbf B)K^2
		+
		n_v(\mathbf B)K^3
		\right]
		\right),
	\end{equation}
	where $I_{\rm IP}$ denotes the number of interior-point iterations. With bisection accuracy $\epsilon_{\rm bis}$, the number of feasibility tests is
	$T_{\rm bis}
		=
		\left\lceil
		\log_2
		\frac{\tau_{\max}-\tau_{\min}}
		{\epsilon_{\rm bis}}
		\right\rceil$,
	and the cost of evaluating one candidate allocation is
	$C_{\rm eval}(\mathbf B)
		=
		T_{\rm bis}C_{\rm IP}(\mathbf B)$, where $\lceil x \rceil$ denotes the smallest integer not less than $x$.
	Since at most $L_{\rm neigh}$ candidates are evaluated in each outer iteration, the overall worst-case complexity is
	\begin{equation}
		C_{\rm prop}
		=
		\mathcal O\left(
		\left[
		S_{\max}
		\left(
		1+I_{\max}^{\rm out}L_{\rm neigh}
		\right)
		+1
		\right]
		\bar C_{\rm eval}
		\right),
	\end{equation}
	where $\bar C_{\rm eval}
		=
		\max_{\mathbf B\in\mathcal B}
		C_{\rm eval}(\mathbf B)$.
	Thus, the computational cost is mainly determined by the number of sampled allocations and the dimension of the reduced beamforming problem. The sampled neighborhood limits the former, while the selection-matrix formulation reduces the latter from the full aperture dimension $M$ to ${N_c}$.

	\vspace{-2mm}
	\section{Simulation Results}
	\label{Simulation Results}
	\vspace{-1mm}

	\subsection{Simulation Setup}
	\vspace{-1mm}

	We consider an SRA-enabled multi-group communication system, where the BS is located at the origin and the users are distributed on the vertical $xOz$ plane.
 The carrier frequency is $f_c=28$ GHz, corresponding to wavelength $\lambda=10.7$ mm. The BS is equipped with a candidate ULA aperture with $M=320$ positions and inter-grid spacing $\Delta=\lambda/2$ \cite{Ahmed2026}. Among them, $N=96$ RAs are activated, with the minimum antenna number of each group set to $N_c^{\min}=K_c$. Unless otherwise specified, the maximum transmit power is $P_{\max}=30$ dBm, the noise power is $\sigma_k^2=-80$ dBm, the RA directivity factor is $p=3$, and the maximum zenith angle is $\phi_{\max}=\pi/2$.

	We adopt a general non-uniform user distribution over the coverage region $\mathcal A$, partitioned into $C$ service groups. A multiuser group represents a compact hotspot, whereas a spatially isolated user is modeled as a singleton group with $K_c=1$ and $\mathbf u_c=\mathbf u_{c,1}$. For each compact group, the center is given by $\mathbf u_c=L_c[\sin\psi_c,0,\cos\psi_c]^T$, where $L_c$ and $\psi_c$ denote its distance and signed angular direction in the $xOz$ plane, respectively, and its users are uniformly distributed within a local disk of radius $R_c$.
	Unless otherwise specified, we consider $C=3$ compact hotspot groups with $(L_1,L_2,L_3)=(40,40,40)$ m, $(\psi_1,\psi_2,\psi_3)=(-60^\circ,0^\circ,60^\circ)$, and $(K_1,K_2,K_3)=(2,3,4)$. The hotspot radius is $R_c=6$ m. Each group is associated with $D_{\rm c}=5$ local scatterers located within the propagation corridor between the BS and the group center. 
	All results are averaged over $50$ independent channel and user-location realizations. For the proposed sampled multi-start greedy search, we set $S_{\max}=10$, $I_{\max}^{\rm out}=30$, $\epsilon_{\rm acc}=10^{-3}$, and $\epsilon_{\rm bis}=10^{-4}$. The sampled neighborhood sizes are $L_{\rm rep}=50$, $L_{\rm swap}=30$, and $L_{\rm rea}=20$, yielding at most $L_{\rm neigh}=100$ candidates per iteration. 

We consider the following seven benchmark schemes:
1) the initialization-only SRA scheme directly applies the analysis-guided antenna-number and non-periodic position initialization in \eqref{Ncont}--\eqref{Projection} without greedy refinement;
2) the equal-allocation SRA scheme evenly distributes the activated RAs among the groups and optimizes their positions using the same sampled multi-start search;
3) the compact-subarray scheme assigns each group a contiguous subarray with the same group-wise antenna numbers as the proposed SRA design;
4) the uniform-position SRA scheme adopts regularly spaced group-specific sparse positions with the same group-wise antenna numbers as the proposed SRA design;
5) the omni sparse-array scheme sets $p=0$;
6) the fixed-orientation SRA scheme uses the proposed allocation with $\mathbf f_m=\mathbf e_z$ for all activated RAs; and
7) the random-allocation SRA scheme randomly generates sparse-aperture allocations and reports the best feasible solution among $500$ random realizations, where the corresponding RA orientations and transmit beamforming are optimized for each allocation.

	\vspace{-2mm}
	\subsection{Performance Analysis}
	\vspace{-1mm}
	\begin{figure}[t]
		\centering
		\includegraphics[width=0.36 \textwidth]{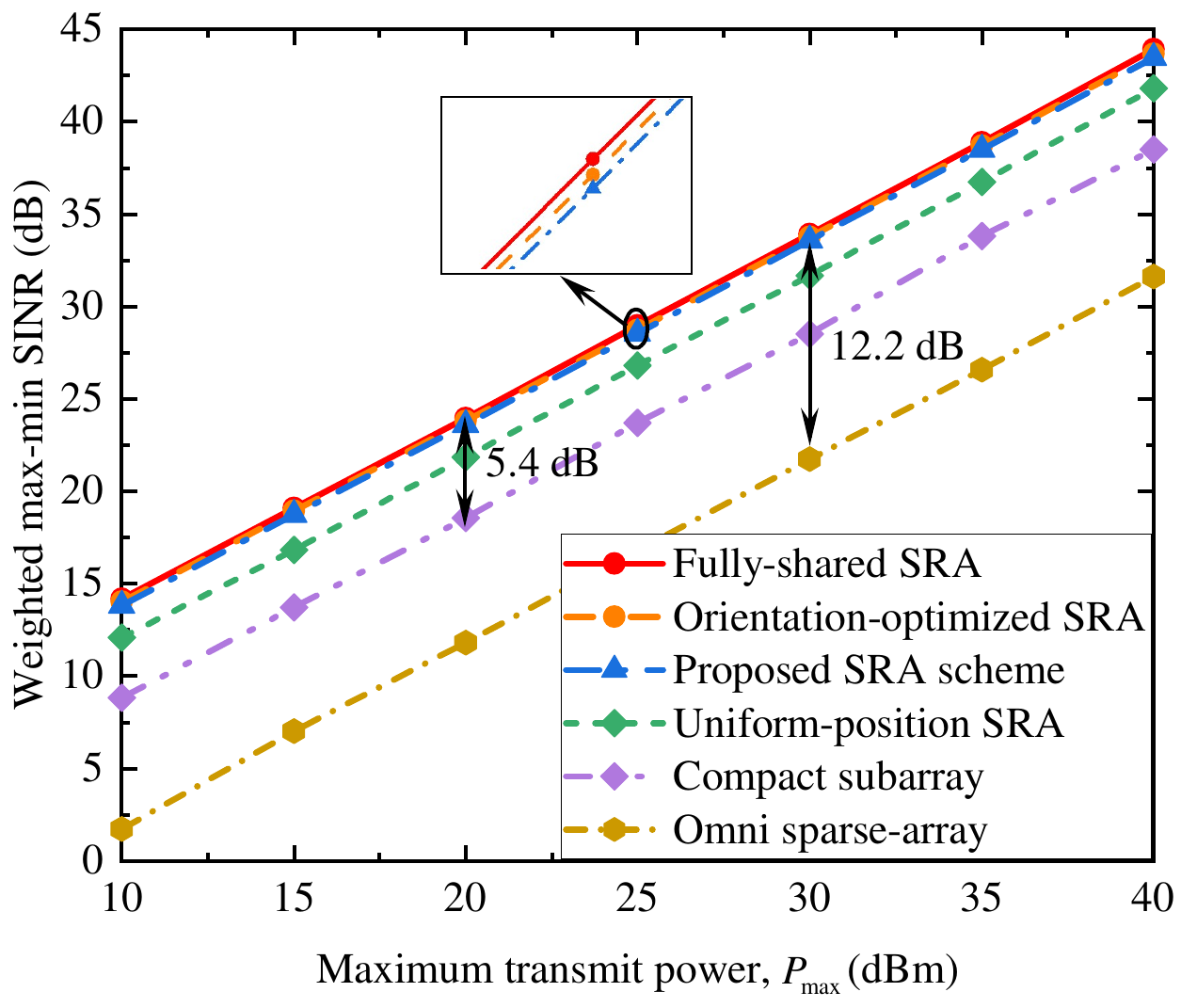}
	\caption{Weighted max-min SINR versus maximum transmit power $P_{\rm max}$.}
		\label{FigP}
		\vspace{-6mm}
	\end{figure}

Fig.~\ref{FigP} shows the weighted max-min SINR versus maximum
transmit power $P_{\max}$. For comparison, the fully-shared
SRA scheme allows activated RAs to jointly serve all users
over a shared sparse aperture with optimized orientations,
whereas the orientation-optimized SRA scheme uses the
proposed antenna allocation while optimizing the RA orientations
and transmit beamforming. The proposed
SRA scheme performs close to the fully-shared SRA scheme. Although
the fully-shared SRA scheme enables greater full-array
cooperation, RA-induced isolation weakens inter-group coupling,
enabling the proposed group-specific architecture to retain most
of its performance. Moreover, the negligible gap from the
orientation-optimized SRA scheme indicates that the element-wise
group-center pointing rule is effective for compact groups,
avoiding additional orientation optimization with little
performance loss.
The proposed SRA scheme outperforms the compact subarray scheme.
At $P_{\max}=20$ dBm, it achieves a $5.4$-dB gain in the
weighted max-min SINR, demonstrating the importance of enlarging
the effective sparse aperture for spatial resolution. It also
achieves a $12.2$-dB gain over the omni sparse-array scheme at
$P_{\max}=30$ dBm. Although the omni sparse-array scheme employs
sparse activation, it cannot exploit RA directionality to
suppress inter-group leakage.

		\begin{figure}[t]
		\centering \includegraphics[width=0.36\textwidth]{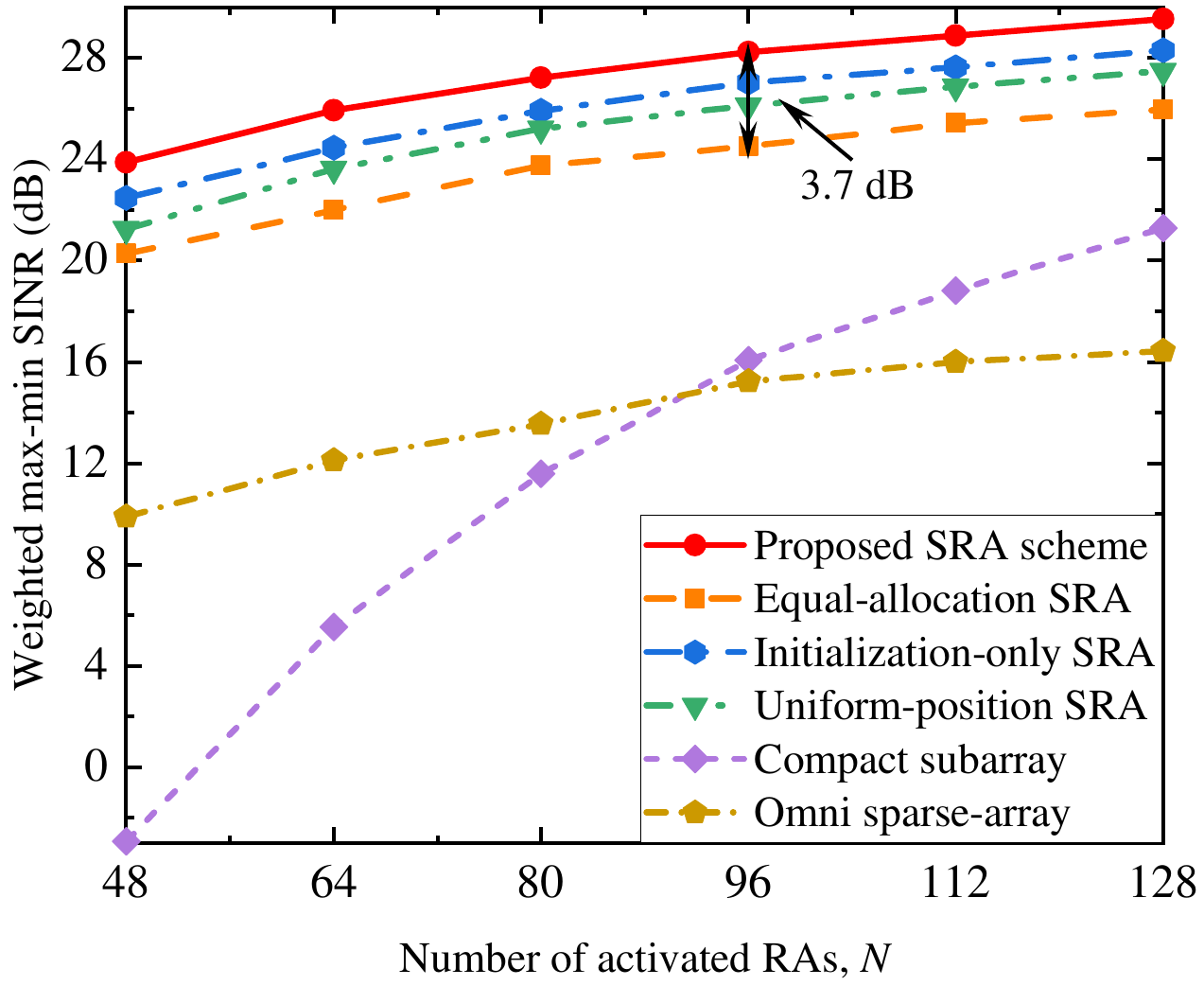}
	\caption{Weighted max-min SINR versus the number of
		activated RAs $N$.}
		\label{fig:N_case}
		\vspace{-4mm}
	\end{figure}

	Fig.~\ref{fig:N_case} compares the weighted max-min SINR
	versus the number of activated RAs $N$ in a mixed deployment
	with $(K_1,K_2,K_3)=(1,3,8)$ and
	$\boldsymbol{\Gamma}=(1,2,5)$. The weighted max-min SINRs
	of all schemes increase with $N$ owing to the enhanced array
	gain, enlarged beamforming design space, and greater
	flexibility in group-wise resource allocation. The proposed SRA
	scheme consistently achieves the highest weighted fairness
	performance by jointly adapting the group-wise antenna
	numbers and non-periodic sparse positions to the heterogeneous
	user loads, priority weights, and channel conditions. At
	$N=96$, it outperforms the equal-allocation SRA scheme by
	approximately $3.7$-dB, demonstrating the importance of
	load- and priority-aware aperture-resource allocation. The
	initialization-only SRA scheme also performs competitively,
	confirming the effectiveness of the analysis-guided
	initialization, while the subsequent sampled search provides
	an additional gain. In contrast, the uniform-position SRA scheme
	is limited by its periodic placement, the compact subarray
	scheme suffers from insufficient aperture resolution, and the
	omni sparse-array scheme cannot exploit RA-induced inter-group
	leakage suppression.

	\begin{figure}[t]
		\centering
		\includegraphics[width=0.36 \textwidth]{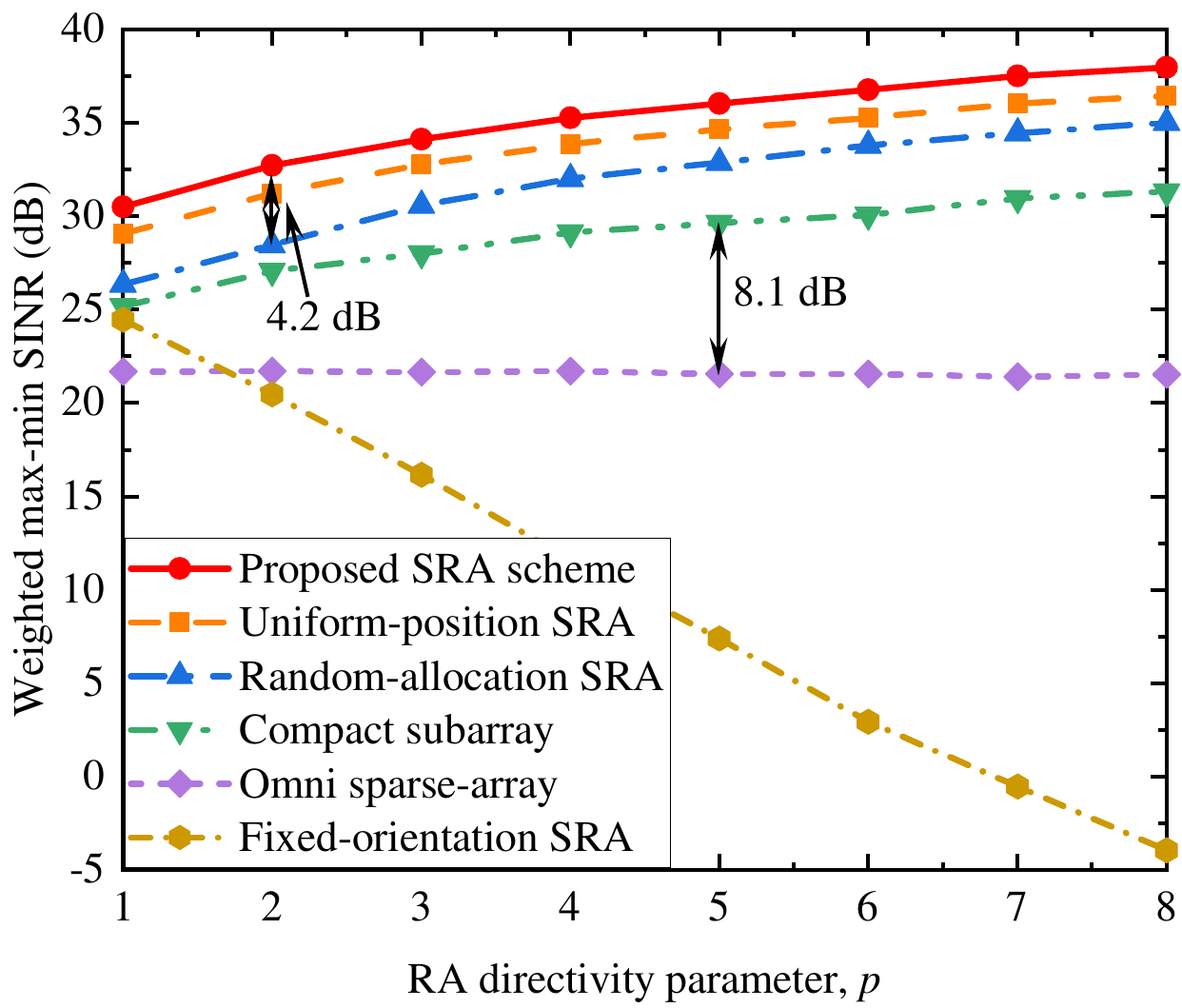}
		\caption{Weighted max-min SINR versus RA directivity
			parameter $p$.}
		\label{FigDire}
		\vspace{-6mm}
	\end{figure}
	
		\begin{figure}[t]
		\centering
		\includegraphics[width=0.36 \textwidth]{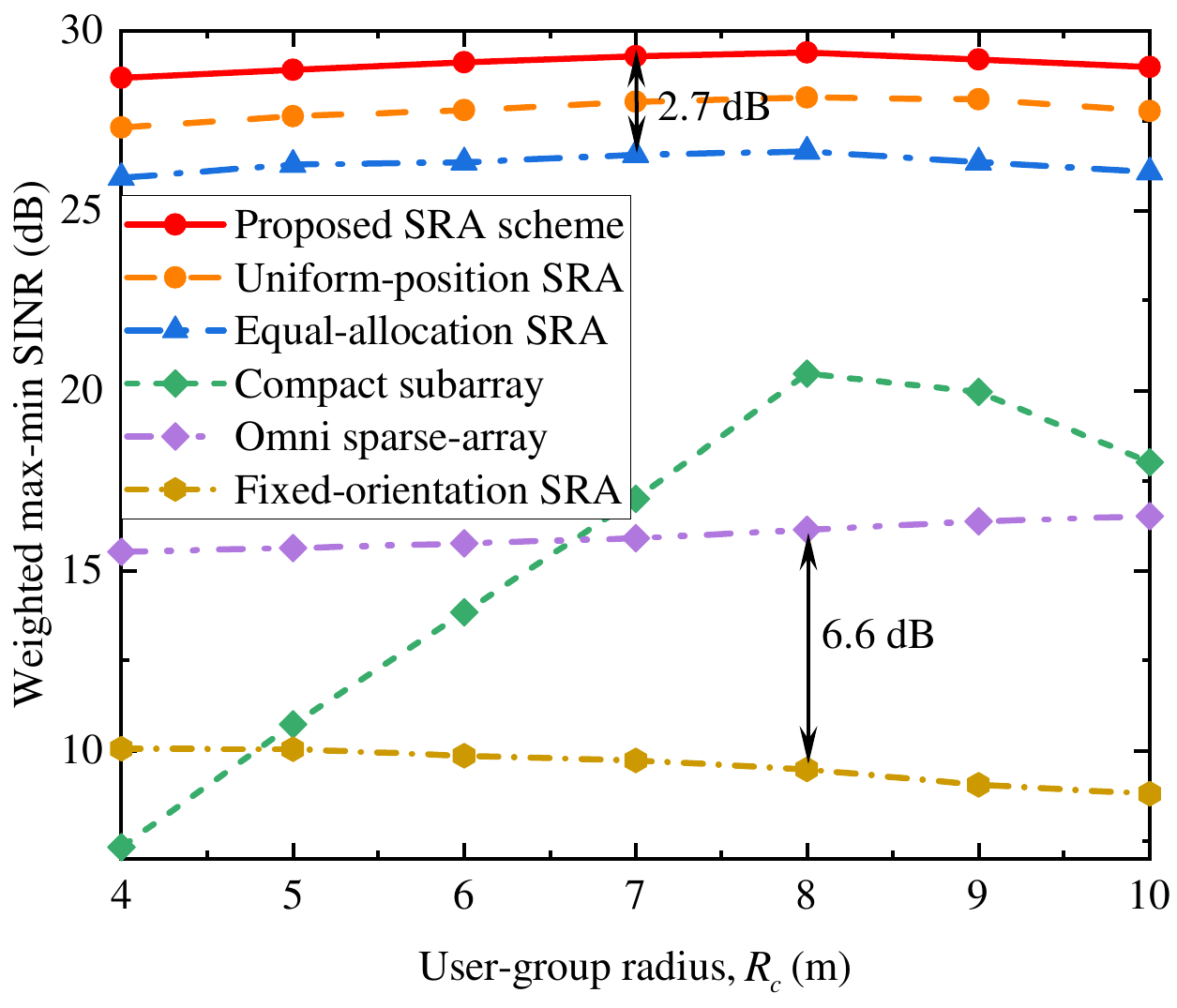}
		\caption{Weighted max-min SINR versus the user-group radius
			$R_c$.}
		\label{FigRc}
		\vspace{-6mm}
	\end{figure}
	
	\begin{figure*}[t]
		\centering
		
		\subfloat[Position-level sparse RA allocation map under different schemes]{
			\includegraphics[width=0.72\textwidth]{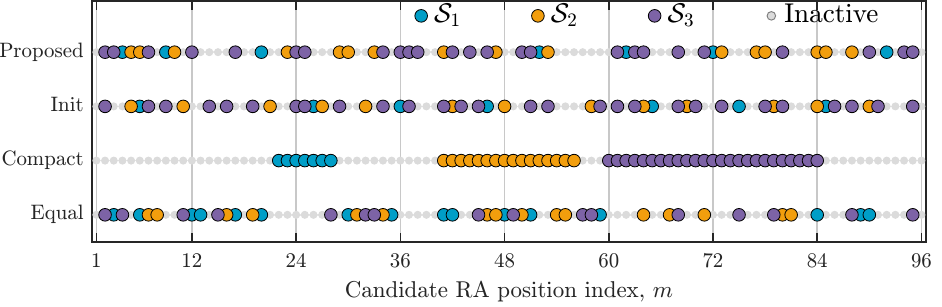}
			\label{Nallocation}
		}
		\hfill
		\subfloat[Illustration of group-wise effective aperture and channel correlation]{
			\includegraphics[width=0.72\textwidth]{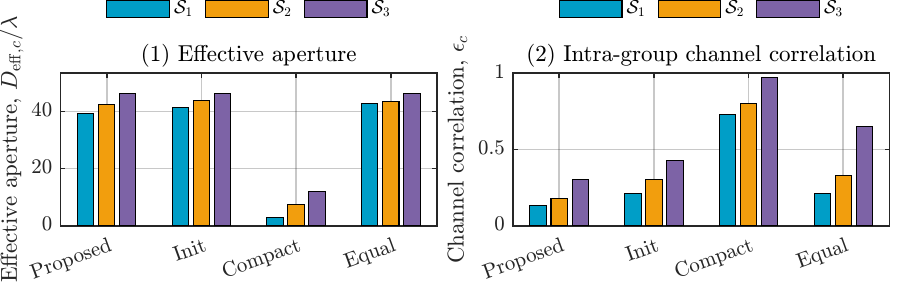}
			\label{NallocationCor}
		}
		\caption{Sparse RA allocation with $M=96$ and $N=48$: (a) position allocation; (b) group-wise effective aperture and channel correlation.}
		\label{NallocationAll}
		\vspace{-7mm}
	\end{figure*}

Fig.~\ref{FigDire} shows the weighted max-min SINR versus
the RA directivity parameter $p$. The performance of the
proposed SRA scheme improves with $p$, since a larger $p$ yields
a higher boresight gain and a narrower directional beam.
Under the group-center pointing rule, the enhanced antenna
directivity strengthens the desired links while suppressing
leakage toward non-target groups. At $p=2$, the proposed SRA
scheme outperforms the random-allocation SRA scheme by approximately
$4.2$-dB, demonstrating the importance of jointly optimizing
the group-wise antenna numbers and sparse positions even under
moderate RA directivity.
The compact subarray scheme also benefits from increasing
$p$, since the enhanced directivity improves inter-group
isolation. However, its performance remains inferior to the
proposed SRA scheme due to the limited effective aperture and
insufficient intra-group spatial separability.  At $p=5$, the compact subarray scheme
outperforms the omni sparse-array scheme by approximately $8.1$-dB,
confirming the effectiveness of antenna directivity in
mitigating inter-group interference. In contrast, the
fixed-orientation SRA scheme degrades as $p$ increases because its
boresights cannot adapt to the group locations, and the
increasingly narrow beams aggravate pointing mismatch and
off-boresight attenuation. These results demonstrate that the
performance gain of SRA comes from the joint exploitation of
large array aperture and antenna directivity.

	Fig.~\ref{FigRc} shows the weighted max-min SINR versus the 
	user-group radius $R_c$. Increasing $R_c$ initially improves 
	the angular and range separability among intra-group users by 
	providing larger spatial separation, but also increases the 
	off-boresight attenuation of edge users due to the group-center 
	pointing strategy. As a result, the proposed SRA scheme remains 
	relatively stable and decreases only slightly at large 
	$R_c$, since it adapts the sparse aperture distribution and 
	antenna numbers to balance spatial separability and directional 
	coverage. At $R_c=7$~m, it outperforms the equal-allocation SRA 
	scheme by approximately $2.7$-dB, demonstrating the benefit of 
	group-aware sparse aperture design. The compact subarray scheme 
	first benefits from the improved user separability and then 
	degrades as its limited aperture resolution and increased 
	off-boresight attenuation become dominant. In contrast, the 
	fixed-orientation SRA scheme deteriorates with $R_c$ because its 
	boresights cannot adapt to the changing user locations. At 
	$R_c=8$~m, the omni sparse-array scheme outperforms the fixed-orientation SRA
	scheme by approximately $6.6$-dB, highlighting the importance of 
	adaptive RA boresight steering for directional coverage.

		Fig.~\ref{NallocationAll} illustrates the allocation behavior with $M=96$ candidate positions and $N=48$ activated RAs. As shown in Fig.~\ref{NallocationAll}(a), the proposed SRA scheme distributes the activated RAs over the entire candidate aperture in a non-periodic manner, whereas the compact subarray scheme assigns contiguous subarrays to different groups. In Fig.~\ref{NallocationAll}(b), the compact subarray scheme has a much smaller effective aperture and suffers from higher intra-group channel correlation. Although the equal-allocation SRA scheme also spans a large aperture, its channel correlation remains higher than that of the proposed SRA scheme, showing that the proposed group-aware non-periodic allocation improves user separability beyond merely enlarging the aperture.

		\vspace{-4mm}
	\subsection{Large-Aperture ZF--MRT Beam-Direction Benchmark}
		\vspace{-1mm}
	To verify the group-wise ZF--MRT beam-direction approximation in Section~\ref{Performance Analysis}, we vary the candidate aperture size $M$ while fixing $\Delta=\lambda/2$ and $N$. 
	For each $M$, all beamforming schemes use the same antenna allocation for a fair comparison.
	
	For a fixed allocation, the group-wise ZF matrix is
	${\bf Q}_c^{\rm ZF}
	=
	\widetilde{\bf H}_{c,c}^H
	\left(
	\widetilde{\bf H}_{c,c}
	\widetilde{\bf H}_{c,c}^H
	\right)^{-1}$,
	where $\widetilde{\bf H}_{c,c}
=
[\widetilde{\bf h}_{c,c,1},\ldots,
\widetilde{\bf h}_{c,c,K_c}]^H$. 
	Let $\mathbf q_{c,k}^{\rm ZF}$ denote its $k$-th column. The normalized group-wise ZF and MRT directions are respectively given by
	\begin{equation}
		\label{MRT}
		\mathbf n_{c,k}^{\rm ZF}
		=
		{\mathbf q_{c,k}^{\rm ZF}}/
		{\Vert\mathbf q_{c,k}^{\rm ZF}\Vert_2},
		\mathbf n_{c,k}^{\rm MRT}
		=
		{\widetilde{\mathbf h}_{c,c,k}}/
		{\Vert\widetilde{\mathbf h}_{c,c,k}\Vert_2}.
		\end{equation}
		For comparison, we consider group-wise MRT with equal power allocation, referred to as MRT-EP. It adopts the same MRT beam directions ${\mathbf n_{c,k}^{\rm MRT}}$ in \eqref{MRT}, while assigning equal transmit power to all users, i.e.,
		$p_{c,k}^{\rm EP}
			=
			\frac{P_{\max}}{K}$.
	For the optimized-power ZF and MRT benchmarks, the transmit powers are obtained by bisection over the common weighted-SINR target, where each fixed-target feasibility test is a linear program. 
	Their phase-invariant beam-direction distance is defined as
\begin{align}
	d^{\rm BD}_{c,k}
	\!\!=\!\!\!
	\min_{\varphi\in[0,2\pi)} \!\!
	\left\|
	{\mathbf n}^{\rm ZF}_{c,k}
	\!-\!
	e^{j\varphi}{\mathbf n}^{\rm MRT}_{c,k}
	\right\|_2 
	\!\!=\!\!
	\sqrt{
		2  (
		1-
		|
		({\mathbf n}^{\rm ZF}_{c,k})^H 
		{\mathbf n}^{\rm MRT}_{c,k}
		|
		)
	}. \nonumber 
\end{align}
Accordingly, the worst-case beam direction distances are given by $d_{\rm ZF-MRT}=
\max_{c\in\mathcal C,\,k\in\mathcal K_c}
d^{\rm BD}_{c,k}$.

	\begin{figure}[t]
		\centering
		\includegraphics[width=0.4 \textwidth]{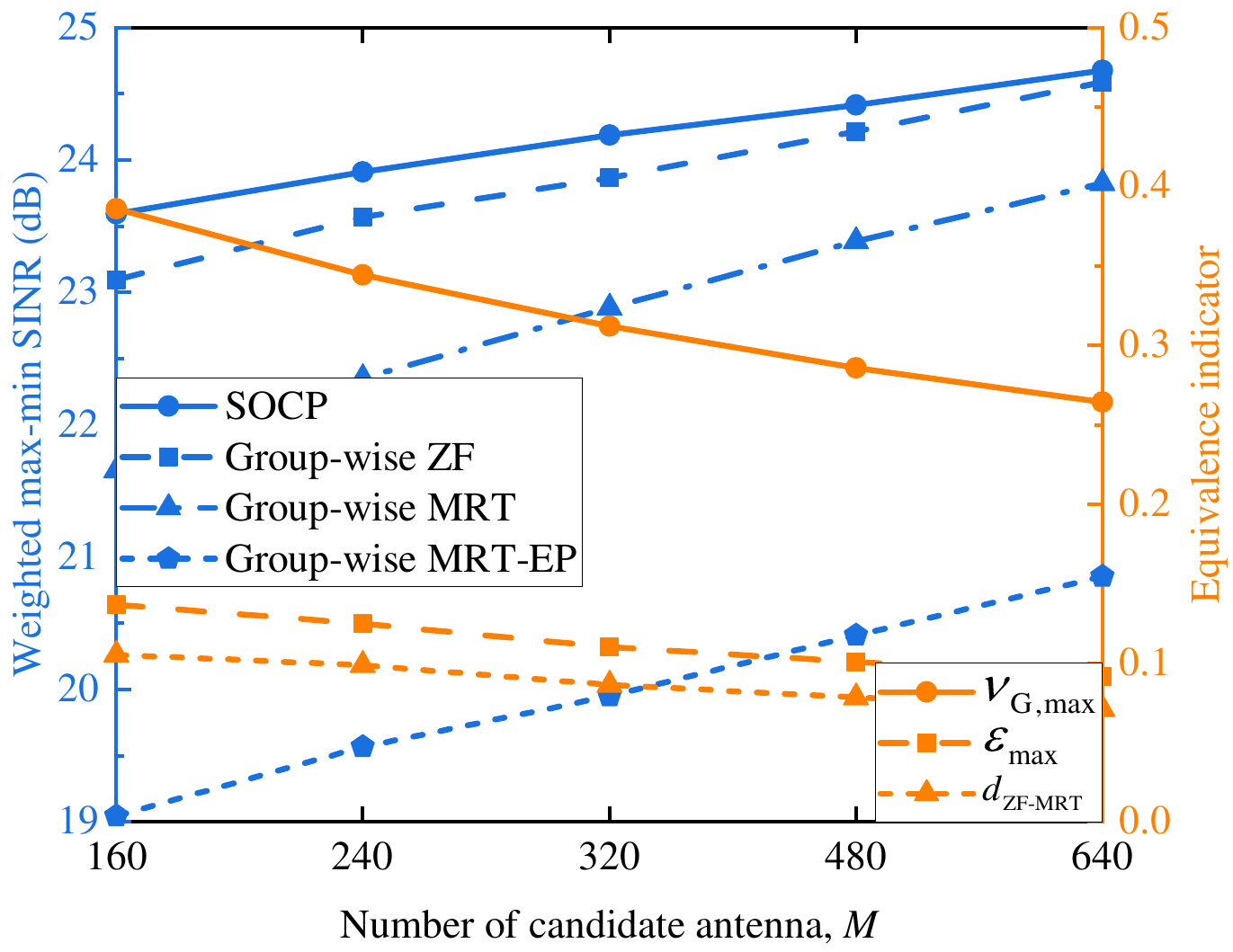} 
		\caption{Weighted max-min SINR and equivalence indicators
			versus the number of candidate RA positions $M$.}
		\label{FigLocalMRTZF}
		\vspace{-7mm}
	\end{figure}

Using the normalized intra-group channel matrix $\bar{\mathbf H}_c$ defined in Section~\ref{III-A}, we form the Gram matrix $\mathbf G_c\triangleq\bar{\mathbf H}_c\bar{\mathbf H}_c^{H}$ to characterize the joint intra-group channel conditioning. We further define
\begin{align}
	\nu_{\mathrm G,c}
	\triangleq
	1-
	\frac{\lambda{\min}(\mathbf G_c)}
	{\lambda_{\max}(\mathbf G_c)},
	\nu_{\mathrm G,\max}
	\triangleq
	\max_{c\in\mathcal C:K_c\geq2}
	\nu_{\mathrm G,c}.
\end{align}
A smaller $\nu_{\mathrm G,\max}$ indicates better-conditioned intra-group channels, with $\nu_{\mathrm G,\max}=0$ corresponding to orthogonality. The approximation scale in Lemma~\ref{pro1} is defined as
$\varepsilon_{\max}
\triangleq
\max\nolimits_{c\in\mathcal C:K_c\geq2}
\sqrt{K_c-1}\varepsilon_c$.
Fig.~\ref{FigLocalMRTZF} verifies the ZF--MRT beam-direction
approximation induced by the enlarged effective aperture under
$P_{\max}=20$ dBm. As the candidate aperture increases, the
improved spatial separability reduces intra-group channel
correlation and enhances the performance of all schemes. The
SOCP beamformer achieves the best performance, while group-wise
ZF closely approaches it by suppressing intra-group interference.
Group-wise MRT also benefits from the enlarged aperture but
remains inferior due to residual multiuser interference. Meanwhile,
$\nu_{\mathrm G,\max}$, $\varepsilon_{\max}$, and $d_{\rm ZF-MRT}$
all decrease with the aperture size, confirming improved channel
orthogonality and the ZF-to-MRT beam-direction convergence
predicted in Lemma~\ref{pro1}. The remaining gap between group-wise ZF and
MRT is attributed to finite-aperture and finite-SNR effects, where
MRT cannot completely eliminate intra-group interference.

	\begin{figure}[t]
		\centering
		\includegraphics[width=0.4 \textwidth]{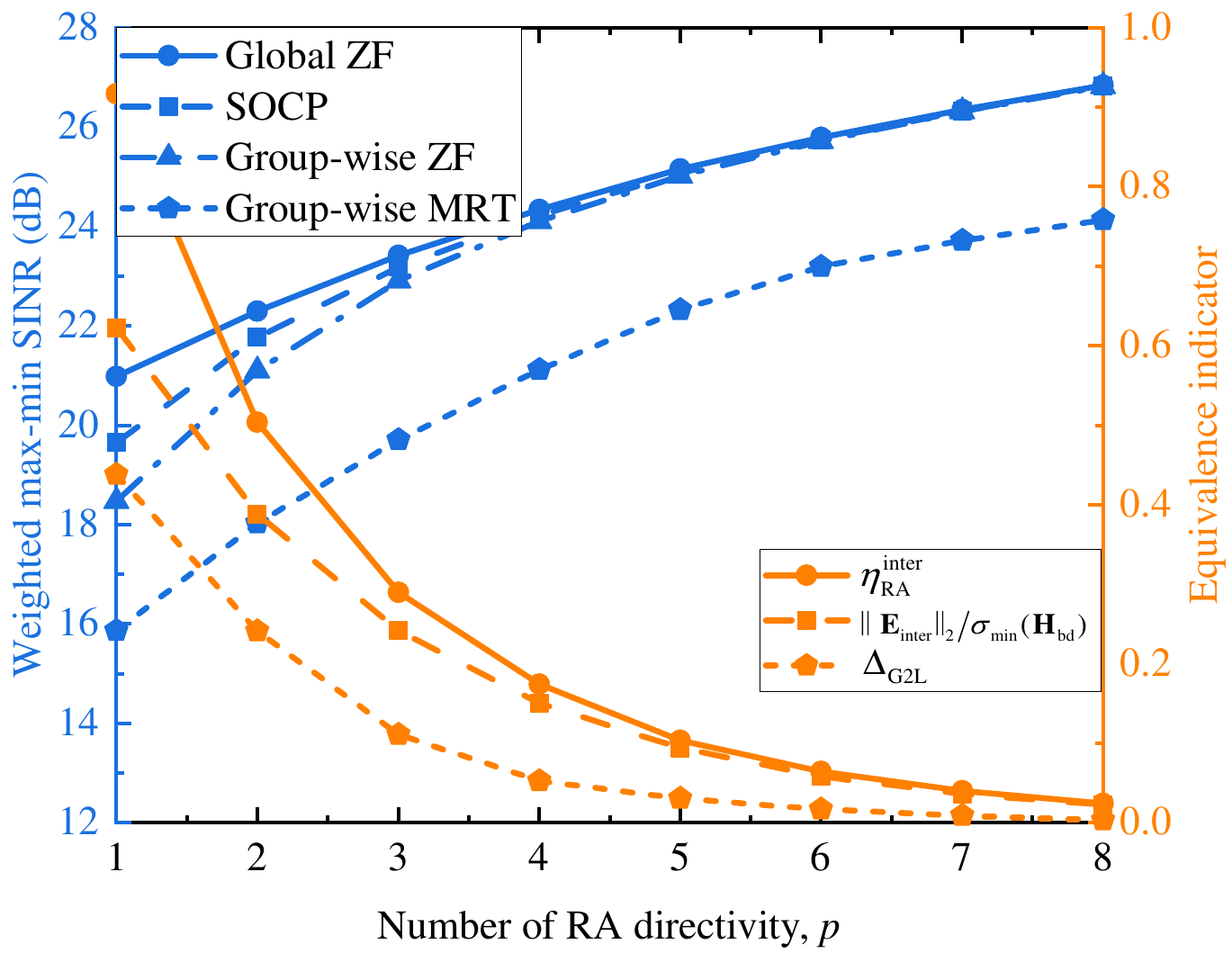}
		\caption{Weighted max-min SINR and equivalence indicators versus RA directivity $p$.} 
		\label{FigDirectivityBF}
		\vspace{-7mm}
	\end{figure}

Fig.~\ref{FigDirectivityBF} evaluates the impact of SRA antenna
directivity on the weighted max-min SINR and the global-to-group-wise
ZF approximation indicators versus the RA directivity parameter $p$. Let $\tau_{\rm GZF}$ and $\tau_{\rm LZF}$ denote the weighted max-min SINRs achieved by global ZF and group-wise ZF, respectively, under optimized power allocation. Their normalized performance gap, evaluated in the linear SINR domain, is defined as $\Delta_{\rm G2L}
	\triangleq
	\frac{
		\left|
		\tau_{\rm GZF}-\tau_{\rm LZF}
		\right|
	}{
		\tau_{\rm GZF}
	}$.
The global ZF precoder is constructed from the global reduced channel matrix $\mathbf H_{\rm g}$, whereas group-wise ZF is independently designed from the diagonal channel components associated with the group-specific subarrays. The indicators $\eta_{\rm RA}^{\rm inter}$, $\Vert\mathbf E_{\rm inter}\Vert_2/\sigma_{\min}(\mathbf H_{\rm bd})$, and $\Delta_{\rm G2L}$ quantify the worst-case RA leakage, the relative strength of the off-diagonal channel components, and the normalized SINR gap, respectively. 
 With the SRA group-center boresight control, increasing $p$ strengthens directional filtering, thereby suppressing inter-group leakage and improving the achievable
SINR. Meanwhile, all three indicators decrease, confirming the SRA directional analysis in Section~\ref{III-B} that the global channel approaches a block-diagonal structure and global ZF approaches group-wise ZF in the low-leakage regime.

		\vspace{-2mm}
	\section{Conclusion} \label{Conclusion}
		\vspace{-1mm}
		In this paper, we proposed an SRA
		architecture for multiuser communications,
		which jointly reconfigures the effective array aperture and antenna directivity through sparse activation and RA boresight adjustment.
 Specifically, for heterogeneous multi-group users,
		a weighted max-min SINR optimization problem was formulated by
		jointly designing the sparse-aperture allocation, RA orientations,
		and transmit beamforming. Then, we characterized the spatial operating principles of SRA, revealing that sparse-aperture configuration
		improves intra-group spatial separability, while antenna directivity
		suppresses inter-group coupling, thereby enabling simplified
		group-wise beamforming structures and providing physical guidance
		for sparse-aperture design. Based on these insights, a low-complexity
		AO framework was developed, where RA orientations are determined
		through a projected group-center rule and sparse-aperture allocation
		and transmit beamforming are iteratively optimized. 
		Numerical results demonstrated that the proposed SRA design achieves significant performance improvements over conventional sparse-array schemes by jointly exploiting large-aperture spatial DoFs and RA-enabled antenna directivity.

	\appendices
	\refstepcounter{section}
		\vspace{-5mm}
			\section*{APPENDIX~\Alph{section}: Proof of Lemma \ref{pro1}}
			\label{AppendixA}
			\vspace{-2mm}
		For user $k$, let $\bar{\mathbf H}_{c,-k}$ be obtained by
		removing the $k$-th row of $\bar{\mathbf H}_c$, and define
	$\mathbf a_{c,k}
			\triangleq
			\bar{\mathbf H}_{c,-k}\bar{\mathbf h}_{c,k}$.
		Under the condition $(K_c-1)\varepsilon_c<1$, the normalized
		Gram matrix
		$\bar{\mathbf H}_c\bar{\mathbf H}_c^H$ is positive definite.
		Indeed, its diagonal entries equal one, while the magnitudes
		of its off-diagonal entries are no larger than
		$\varepsilon_c$. By the Gershgorin circle theorem, we have
		\begin{equation}
			\lambda_{\min}
			\left(
			\bar{\mathbf H}_c\bar{\mathbf H}_c^H
			\right)
			\geq
			1-(K_c-1)\varepsilon_c
			>
			0.
			\label{eq:app_gram_positive}
		\end{equation}
		Therefore, the group-wise ZF direction is well defined.
		
		The group-wise ZF direction for user $k$ is parallel to the
		component of $\bar{\mathbf h}_{c,k}$ orthogonal to the
		subspace spanned by the other users' normalized channels.
		Define the corresponding orthogonal projection matrix as
		$\mathbf P_{c,-k}
			\triangleq
			\bar{\mathbf H}_{c,-k}^{H}
			(
			\bar{\mathbf H}_{c,-k}
			\bar{\mathbf H}_{c,-k}^{H}
			)^{-1}
			\bar{\mathbf H}_{c,-k}$,
		and let $\eta_{c,k}
			\triangleq
			\left\|
			\mathbf P_{c,-k}\bar{\mathbf h}_{c,k}
			\right\|_2$.
		Since $\bar{\mathbf H}_c\bar{\mathbf H}_c^H$ is positive
		definite, $\bar{\mathbf h}_{c,k}$ does not belong to the
		subspace spanned by the other users' channels. Hence,
		$0\leq\eta_{c,k}<1$.
		The normalized group-wise ZF direction is parallel to
		$(\mathbf I-\mathbf P_{c,-k})\bar{\mathbf h}_{c,k}$.
		Moreover,
		$\|
			(\mathbf I-\mathbf P_{c,-k})
			\bar{\mathbf h}_{c,k}
			\|_2
			=
			\sqrt{1-\eta_{c,k}^2}$.
		Therefore, after phase alignment, the ZF--MRT
		beam-direction distance satisfies
		\begin{align}
			\left(
			d_{c,k}^{\rm ZF-MRT}
			\right)^2
			&=
			2(
			1-\sqrt{1-\eta_{c,k}^2}
			)\leq
			2\eta_{c,k}^2,
			\label{eq:app_distance_eta}
		\end{align}
		where the inequality follows from
		$1-\sqrt{1-x}\leq x$ for $x\in[0,1]$. Thus, we have
		$d_{c,k}^{\rm ZF-MRT}
			\leq
			\sqrt{2}\eta_{c,k}$.
		It remains to bound $\eta_{c,k}$. We define
		$\mathbf R_{c,-k}
			\triangleq
			\bar{\mathbf H}_{c,-k}
			\bar{\mathbf H}_{c,-k}^{H}$.
		We have $\eta_{c,k}^2
			=
			\mathbf a_{c,k}^{H}
			\mathbf R_{c,-k}^{-1}
			\mathbf a_{c,k}$.
		The matrix $\mathbf R_{c,-k}$ has unit diagonal entries,
		while the magnitudes of its off-diagonal entries are no larger
		than $\varepsilon_c$. Hence, the Gershgorin circle theorem
		gives $\lambda_{\min}
			\left(
			\mathbf R_{c,-k}
			\right)
		\!	\geq \!
			1 \!-\! (K_c \! -\! 2)\varepsilon_c$.
		Then, we obtain
		\begin{equation}
			\eta_{c,k}
			\leq
			\frac{
				\|\mathbf a_{c,k}\|_2
			}{
				\sqrt{1-(K_c-2)\varepsilon_c}
			}.
			\label{eq:app_eta_final}
		\end{equation}
		Furthermore, we have
		\begin{align}
			\|\mathbf a_{c,k}\|_2^2
			&=
			\sum\nolimits_{i\neq k}
			\left|
			\bar{\mathbf h}_{c,i}^{H}
			\bar{\mathbf h}_{c,k}
			\right|^2\leq
			(K_c-1)\varepsilon_c^2.
			\label{eq:app_a_final}
		\end{align}
		Finally, we obtain
		\begin{equation}
			d_{c,k}^{\rm ZF-MRT}
			\leq
			\frac{
				\sqrt{2(K_c-1)}\varepsilon_c
			}{
				\sqrt{1-(K_c-2)\varepsilon_c}
			},
		\end{equation}
		which proves \eqref{eq:zf_mrt_bound}.
		For fixed $K_c$, the denominator approaches one as
		$\varepsilon_c\to0$. Therefore, we have
		\begin{equation}
			d_{c,k}^{\rm ZF-MRT}
			=
			\mathcal O
			(
			\sqrt{K_c-1}\varepsilon_c
			),
		\end{equation}
		which establishes the asymptotic statement in
		Lemma~\ref{pro1}. This completes the proof.

		\refstepcounter{section}
		\vspace{-4mm}
		\section*{APPENDIX~\Alph{section}: Proof of Lemma \ref{pro2}}
	\label{app2}
	\vspace{-2mm}
	Since each diagonal block $\mathbf H_{c,c}$ has full row rank,
	the block-diagonal matrix $\mathbf H_{\rm bd}$ also has full row rank.
	Since $\mathbf H_g
		=
		\mathbf H_{\rm bd}
		+
		\mathbf E_{\rm inter}$,
under the condition
	\begin{equation}
		\|\mathbf E_{\rm inter}\|_2
		\leq
		\rho\sigma_{\min}(\mathbf H_{\rm bd})=\rho	s_{\rm bd},
		 0<\rho<1,
	\end{equation}
	Weyl's inequality gives
	\begin{align}
		\sigma_{\min}(\mathbf H_g)
		\!\geq \!
		\sigma_{\min}(\mathbf H_{\rm bd})
		\!-\!
		\|\mathbf E_{\rm inter}\|_2 
		\!\geq \!
		( 1 \!-\! \rho)	s_{\rm bd}
		\!>\!0. \nonumber 
	\end{align}
	Therefore, $\mathbf H_g$ has full row rank, and the global ZF
	precoder is well defined.
Then, we define
	\begin{align}
		\!\!\mathbf A
		\!\!=\!\!
		\mathbf H_{\rm bd}\mathbf H_{\rm bd}^{H},
		\boldsymbol{\Delta}_{\rm inter}
		\!=\!
		\mathbf H_{\rm bd}\mathbf E_{\rm inter}^{H}
		\!+\!
		\mathbf E_{\rm inter}\mathbf H_{\rm bd}^{H}
		\!+\!
		\mathbf E_{\rm inter}\mathbf E_{\rm inter}^{H}. \nonumber
	\end{align}
	Then, we get $\mathbf H_g\mathbf H_g^{H}=
	\mathbf A
	+
	\boldsymbol{\Delta}_{\rm inter}$ and
	\begin{align}
		\|\boldsymbol{\Delta}_{\rm inter}\|_2
		&\leq
		2\|\mathbf H_{\rm bd}\|_2\|\mathbf E_{\rm inter}\|_2
		+
		\|\mathbf E_{\rm inter}\|_2^2 \nonumber \\
		&\leq
		\left(
		2\|\mathbf H_{\rm bd}\|_2
		+
		\rho	s_{\rm bd}
		\right)
		\|\mathbf E_{\rm inter}\|_2.
	\end{align}
	Furthermore, we obtain
	\begin{align}
		\!\left\|
		(\mathbf H_g\mathbf H_g^{H})^{-1}
		\right\|_2
		\!\leq\!
		\frac{1}{
			(1\!-\! \rho)^2
				s_{\rm bd}^2
		},
		\|\mathbf A^{-1}\|_2
		\!=\!
		\frac{1}{	s_{\rm bd}^2}. \nonumber 
	\end{align}
	Using the matrix inverse perturbation identity
$(\mathbf A+\boldsymbol{\Delta}_{\rm inter})^{-1}
		-
		\mathbf A^{-1}
		=
		-(\mathbf A+\boldsymbol{\Delta}_{\rm inter})^{-1}
		\boldsymbol{\Delta}_{\rm inter}
		\mathbf A^{-1}$,
	we obtain
	\begin{align}
		\mathbf W_g^{\rm ZF}
	\!	\!-\!\!
		\mathbf W_{\rm gw}^{\rm ZF}
	\!	=\!
		\mathbf E_{\rm inter}^{H}
		(\mathbf H_g\mathbf H_g^{H})^{-1}
		\!\!-\!\!
		\mathbf H_{\rm bd}^{H}
		(\mathbf H_g\mathbf H_g^{H})^{-1}
		\boldsymbol{\Delta}_{\rm inter}
		\mathbf A^{-1}, \nonumber 
	\end{align}
	where $	\mathbf W_{\rm gw}^{\rm ZF}
		=
		\mathbf H_{\rm bd}^{H}
		(\mathbf H_{\rm bd}\mathbf H_{\rm bd}^{H})^{-1}$.
	Therefore, we have 
		\begin{align}
		&
		\left\|
		\mathbf W_g^{\rm ZF}
		-
		\mathbf W_{\rm gw}^{\rm ZF}
		\right\|_2\leq
		\frac{\|\mathbf E_{\rm inter}\|_2}
		{(1-\rho)^2s_{\rm bd}^2}
		+
		\frac{
			\|\mathbf H_{\rm bd}\|_2
			\|\bm{\Delta}_{\rm inter}\|_2
		}{
			(1-\rho)^2s_{\rm bd}^4
		}
		\nonumber\\
		&\leq
		\left[
		\frac{1}
		{(1-\rho)^2s_{\rm bd}^2}
		+
		\frac{
			\|\mathbf H_{\rm bd}\|_2
			\left(
			2\|\mathbf H_{\rm bd}\|_2
			+
			\rho s_{\rm bd}
			\right)
		}{
			(1-\rho)^2s_{\rm bd}^4
		}
		\right]
		\|\mathbf E_{\rm inter}\|_2. \nonumber
	\end{align}
		Finally, applying
	$\|\mathbf H_{\rm bd}\|_2=\kappa_{\rm bd}s_{\rm bd}$,
we get
	\begin{align}
		&
		\frac{1}
		{(1\!-\! \rho)^2s_{\rm bd}^2}
		\!\!+\!\!
		\frac{
			\|\mathbf H_{\rm bd}\|_2
			\left(
			2\|\mathbf H_{\rm bd}\|_2
			\!+\!
			\rho s_{\rm bd}
			\right)
		}{
			(1-\rho)^2s_{\rm bd}^4
		}\!=\!
		\frac{
			1
		\!	+\!
			2\kappa_{\rm bd}^2
			\!+\!
			\rho\kappa_{\rm bd}
		}{
			(1\!-\! \rho)^2s_{\rm bd}^2
		}. \nonumber
	\end{align}
	Therefore, we obtain
	\begin{equation}
		\left\|
		\mathbf W_g^{\rm ZF}
		-
		\mathbf W_{\rm gw}^{\rm ZF}
		\right\|_2
		\leq
		\frac{
			1
			+
			2\kappa_{\rm bd}^2
			+
			\rho\kappa_{\rm bd}
		}{
			(1-\rho)^2s_{\rm bd}^2
		}
		\|\mathbf E_{\rm inter}\|_2,
	\end{equation}
	which proves \eqref{eq:global_groupwise_zf_bound}.
	This completes the proof.

		\vspace{-3mm}
	\bibliographystyle{IEEEtran}
	\vspace{-1mm}
	\begin{spacing}{0.92}
		\bibliography{thesis}
	\end{spacing}
	
\end{document}